\documentclass[onecolumn]{autart}    

\usepackage{color}
\usepackage{amsmath,amssymb,amsfonts}
\usepackage[bb=boondox]{mathalfa}
\usepackage{stmaryrd}
\usepackage{graphicx}          

\usepackage{cite}
\usepackage{enumitem}

\newcommand{\bR} { {\mathbb R}}
\newcommand{\bZ} { {\mathbb Z}}

\newcommand{\cK} { {\mathcal K}}
\newcommand{\cU} { {\mathcal U}}
\newcommand{\cX} { {\mathcal X}}

\newcommand{\sD} {\mathsf{D}}

\newcommand{\ux} { {\cU \to \cX} }

\newcommand{\0}{\mathbb{0}}
\newcommand{\I}{\mathbb{I}}

\usepackage{xcolor}

\definecolor{fblightblue}{RGB}{220,235,250}

\definecolor{yklightred}{RGB}{250,235,220}

\def\red{\hfill $\lhd$}

\allowdisplaybreaks[4]

\begin{document}

\begin{frontmatter}

\title{Input-to-State Stability Implications in\\ Contraction Theory\thanksref{footnoteinfo}} 

\thanks[footnoteinfo]{
This paper was not presented at any IFAC 
meeting. Corresponding author Y.~Kawano. Tel. +81-82-424-4210.
}

\author[JP]{Yu Kawano}\ead{ykawano@hiroshima-u.ac.jp},    
\author[USA]{Francesco Bullo}\ead{bullo@ucsb.edu},               

\address[JP]{Graduate School of Advanced Science and Engineering, Hiroshima University, Higashi-Hiroshima 739-8527, Japan}  
\address[USA]{Department of Mechanical Engineering and the Center for Control, Dynamical Systems, and Computation, University of California, Santa Barbara, USA}             

\begin{keyword}                           
Contraction Theory; Incremental Stability; Input-to-State Stability.  
\end{keyword}                             

\begin{abstract}                          

    For nonlinear control systems on normed vector spaces, we
      characterize an incremental input-to-state stability (ISS) type property in
      which the overshoot constant multiplies both the initial-condition
      and the input terms. Working through the associated variational
      system, we show that two properties are equivalent: an ISS-type bound on the variational system, and the
      incremental ISS-type bound on the original system. We further establish the equivalence between an infinitesimal contraction condition, expressed through a Lyapunov-type function, and an incremental Lyapunov condition. Each of these equivalent conditions yields a necessary condition and a sufficient condition for the ISS-type bounds, differing only in the input Lipschitz constant of the vector field. When the overshoot constant equals one, the infinitesimal contraction condition reduces to the standard norm-based
      contraction condition. We establish these implications under mere
      continuous differentiability of the vector field, and we illustrate
      the results through sensitivity matrices and Lyapunov characteristic
      exponents. Moreover, we develop similar implications for discrete-time systems.

\end{abstract}

\end{frontmatter}

\section{Introduction}

Contraction theory~\cite{LW-SJJE:98,FF-RS:14,FB:26} provides a framework to study stability between the pair of trajectories of a system, known as \emph{incremental stability}. Often, incremental stability is analyzed by using the associated variational system. For control systems, a central role is played by \emph{incremental input-to-state stability} (ISS), also referred to as $\delta$ISS. Contractivity and incremental ISS are closely related, and it is a classic problem to characterize the equivalence between these two properties.

The study of this equivalence builds on the influential works~\cite{DA:02,DA:09} and~\cite{VA-BJ-LP:16}, along with several recent contributions. Most references address the Riemannian or, more generally, Finslerian setting~\cite{LW-SJJE:98,FF-RS:14,PG:15,VA-BJ-LP:16,YK-BB-MC:20,MG-DA-VA:23,YK-BB:24} for continuous-time systems; the normed vector space setting is considered in~\cite{NB-RO-AP:19,AD-SJ-FB:22}. In the classical Lyapunov setting, necessary and sufficient characterizations of incremental ISS are established in~\cite{DA:02,DA:09}.

The contributions of this paper are as follows. First, as the main contribution, we establish a necessary condition and a sufficient condition for an incremental ISS-type property in the contractivity framework through the analysis of variational systems. These conditions differ only in the input Lipschitz constant of the vector field. To the best of the authors' knowledge, no such characterization has been previously available. We further provide an equivalent incremental Lyapunov characterization. Second, unlike the aforementioned Finslerian setting, our analysis relaxes twice continuous differentiability of the vector field to continuous differentiability by showing that the transition matrix of the variational system coincides with the G\^ateaux derivative of the flow of the original system. A more detailed comparison with the literature can be found in Section~\ref{sec:ca}. Also, we illustrate some implications of our results for sensitivity functions~\cite{MZ-LX-GFT:23} and Lyapunov characteristic exponents~\cite{SW:03}. Finally, we develop similar ISS-type implications for discrete-time systems.


{\it Notation:}
The sets of real numbers and integers are denoted by~$\bR$ and~$\bZ$, respectively. The sets of real numbers and integers that are not less than ~$a \in \bR$ are denoted by~$\bR_{\ge a}$ and~$\bZ_{\ge a}$, respectively. Similarly, the sets of real numbers and integers that are greater than~$a \in \bR$ are denoted by~$\bR_{> a}$ and~$\bZ_{> a}$, respectively.
The~$n$-component vector and~$n \times m$-matrix whose all components are~$0$ are denoted by~$\0_n$ and~$\0_{n \times m}$, respectively. The~$n \times n$ identity matrix is denoted by~$\I_n$. For differentiable~$f:\bR \times \bR^n \times \bR^m \to \bR^n$, the partial derivatives of~$f(t,x,u)$ with respect to~$x$ and~$u$ are denoted by~$\sD_x f(t,x,u)$ and~$\sD_u f(t,x,u)$, respectively. 
Given vector norms~$\| \cdot \|_{\cX}$ and~$\| \cdot \|_\cU$, the induced matrix norm of~$\sD_u f(t,x,u)$ is denoted by
\begin{align*}
    \| \sD_u f \|_{\ux} := \sup_{(t,x,u)\in \bR \times \bR^n \times \bR^m} \| \sD_u f (t,x,u) \|_{\ux}, \quad 
    \| \sD_u f (t,x,u) \|_{\ux} := \max_{\substack{\|v\|_\cU = 1}} \|\sD_u f(t,x,u) v\|_{\cX}.    
\end{align*}
This is equivalent to the Lipschitz constant of~$f(t,x,u)$ in~$u$ from~$\| \cdot \|_\cU$ to~$\| \cdot \|_{\cX}$.
Where~$\| \cdot \|_\cU =\| \cdot \|_\cX$, we simply write~$\| \cdot \|_\cX$, i.e.,
\begin{align*}
    \| \sD_x f \|_\cX := \sup_{(t,x,u)\in \bR \times \bR^n \times \bR^m} \| \sD_x f (t,x,u) \|_\cX , \quad \
    \| \sD_x f (t,x,u) \|_\cX :=\max_{\substack{\|v\|_\cX = 1}} \|\sD_x f(t,x,u) v\|_{\cX}.
\end{align*}



\section{Input-to-State Stability Equivalences}

\subsection{Variational Systems}

Consider a nonlinear control system:
\begin{align}\label{eq:sys}
	\dot x = f(t, x, u),
\end{align}
where~$f: \bR \times \bR^n \times \bR^m \to \bR^n$. Let~$\phi (t, t_0, x_0, u)$ denote the solution to the system~\eqref{eq:sys} at time~$t \in \bR$ with initial condition~$x(t_0) = x_0 \in \bR^n$ and input~$u:\bR \to \bR^m$. Namely,~$x(t) = \phi (t, t_0, x_0, u)$. We impose the following assumption for~$f(t,x,u)$.

\begin{assum}[Continuous Differentiability of Vector Field]\label{asm:vf}
  The vector field~$f(t,x,u)$ and its Jacobians $\sD_x f(t,x,u)$ and~$\sD_u f(t,x,u)$ exist and are jointly continuous in~$(t,x,u) \in \bR \times \bR^n \times \bR^m$. 
    \red
\end{assum}

\begin{rem}[Local Unique Existence of Solution]\label{rem:sol}
    By the Picard--Lindel\"of theorem~\cite[Theorem 1.14]{FB:26}, Assumption~\ref{asm:vf} implies that for each initial condition~$(t_0, x_0) \in \bR \times \bR^n$ and every continuous input~$u:\bR \to \bR^m$, there exists~$t_1 = t_1 (t_0, x_0, u)\in \bR_{>t_0}$ such that the solution~$\phi (t, t_0, x_0, u)$ to the system~\eqref{eq:sys} exist and are unique and continuously differentiable with respect to~$t \in [t_0, t_1]$. Moreover, by continuous dependence on initial conditions, for each continuous~$u:\bR \to \bR^m$, its partial derivative~$\sD_{x_0} \phi^u (t, t_0, x_0)$ with~$\phi^u (t, t_0, x_0) := \phi (t, t_0, x_0, u)$ exists and is jointly continuous in~$(t, t_0, x_0)$ on the domain of existence.
    \red
\end{rem}

Our analysis relies on a fundamental relation between the solutions to the system and its variational system along~$\phi (t, t_0, x_0, u)$ with the initial condition~$\delta x(t_0) = \delta x_0 \in \bR^n$ and input~$\delta u:\bR \to \bR^m$:
\begin{align}\label{eq:vsys}
	\dot {\delta x} = \bigl(\sD_x f(t, x_t, u(t)) \delta x + \sD_u f(t, x_t, u(t)) \delta u\bigr)\bigr|_{x_t=\phi (t, t_0, x_0, u)}.
\end{align}

\begin{thm}[G\^ateaux Derivative of Solution]\label{thm:GD}
	Under Assumption~\ref{asm:vf}, for each initial condition~$(t_0, x_0) \in \bR \times \bR^n$ and every continuous input~$u:\bR \to \bR^m$, there exists~$t_1 = t_1 (t_0, x_0, u)\in \bR_{>t_0}$ such that the solution~$\delta x(t)$ to the variational system~\eqref{eq:vsys} satisfies
    \begin{align}\label{eq:sol_GD}
        \delta x(t) = \lim_{h \to 0^+} \frac{\phi (t, t_0, x_0 + h \delta x_0, u + h \delta u) - \phi (t, t_0, x_0, u)}{h}
	\end{align}
    for each~$t \in [t_0, t_1]$, every initial condition~$\delta x(t_0)=\delta x_0 \in \bR^n$, and every continuous input~$\delta u:\bR \to \bR^m$.
\end{thm}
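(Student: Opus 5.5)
Our plan is to show that the difference quotient converges to the variational solution directly, using the integral form of the ODE together with Gronwall's inequality. This avoids differentiating the flow with respect to the input, since no such differentiability is assumed.

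We begin by fixing the time interval. Let $x(t)=\phi(t,t_0,x_0,u)$ be defined on an interval $[t_0,\bar t_1]$ given by Remark~\ref{rem:sol}. Let $K$ be a compact neighborhood of the trajectory's range, say a tube of radius $1$ around $\{(t,x(t),u(t)):t\in[t_0,\bar t_1]\}$ in $(t,x,u)$-space. On $K$, Assumption~\ref{asm:vf} gives uniform bounds $\|\sD_x f\|\le L$ and $\|\sD_u f\|\le M$, and uniform continuity of $\sD_x f$ and $\sD_u f$.

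Next we set up the perturbed trajectory. For $h>0$ small, write $x_h(t)=\phi(t,t_0,x_0+h\delta x_0,u+h\delta u)$. By a standard continuation and Gronwall argument, using the Lipschitz bounds $L$ and $M$ on $K$, the solution $x_h$ exists on all of $[t_0,t_1]$ with $t_1:=\bar t_1$ for all sufficiently small $h$. It also satisfies
\[
\|x_h(t)-x(t)\|\le C h,
\]
and so it stays in $K$. We would take $t_1$ to be this common interval. The variational system~\eqref{eq:vsys} is linear with continuous coefficients, so $\delta x$ exists on the whole of $[t_0,t_1]$.

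The main estimate concerns the error $e_h(t):=\bigl(x_h(t)-x(t)\bigr)/h-\delta x(t)$. We write $x_h-x$ in integral form and apply the mean value theorem in integral form:
\[
f(s,x_h,u+h\delta u)-f(s,x,u)=\int_0^1\bigl[\sD_x f(\cdot)\,(x_h-x)+\sD_u f(\cdot)\,h\delta u\bigr]\,d\theta,
\]
where the Jacobians are evaluated at the intermediate points $(s,x+\theta(x_h-x),u+\theta h\delta u)$. Subtracting the variational equation then gives
\[
\|e_h(t)\|\le \int_{t_0}^t L\,\|e_h(s)\|\,ds+\int_{t_0}^t \omega(Ch)\bigl(C+\|\delta u(s)\|\bigr)\,ds,
\]
where $\omega$ is a modulus of uniform continuity of $(\sD_x f,\sD_u f)$ on $K$. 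The intermediate points lie within distance $O(h)$ of the nominal point $(s,x(s),u(s))$, and this is exactly what makes $\omega(Ch)$ the right error factor. Gronwall's inequality then yields $\sup_{[t_0,t_1]}\|e_h\|\le c\,\omega(Ch)\to0$ as $h\to0^+$, which is~\eqref{eq:sol_GD}.

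The main obstacle is the uniform-in-$h$ existence and the $O(h)$ bound for $x_h$ on a fixed interval, since only local existence is given. We would handle this by a bootstrap argument. Let $\tau_h$ be the first exit time of $x_h$ from the radius-$1$ tube. Before $\tau_h$, Gronwall gives $\|x_h-x\|\le Ch<1$, so $\tau_h$ cannot occur before $t_1$ once $h$ is small. Only continuity of the Jacobians is used, so $C^1$ regularity suffices.
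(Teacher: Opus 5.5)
Your proposal is correct and follows essentially the same route as the paper. Both arguments build a compact tube around $(t,\phi(t),u(t))$ with uniform Jacobian bounds, then use a first-exit-time/Gr\"onwall bootstrap to get an $h$-independent bound on $(\phi_h-\phi)/h$ over a fixed interval. Both then control the integral mean-value remainder by the modulus of uniform continuity of $(\sD_x f,\sD_u f)$ on the tube and close with a second Gr\"onwall estimate. Your bootstrap also settles existence of the perturbed solution on $[t_0,t_1]$ together with the bound, a point the paper simply asserts via $h_1$ before its exit-time argument.
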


\begin{pf}
    The proof is shown in Section~\ref{pf:GD}. \qed
\end{pf}

Since $\phi (t, t_0, x_0, u)$ is continuously differentiable in~$x_0$ by Assumption~\ref{asm:vf}, \eqref{eq:sol_GD} implies
\begin{align}\label{eq:tm}
    \delta x(t)
    = \lim_{h \to 0^+} \frac{\phi (t, t_0, x_0 + h \delta x_0, u ) - \phi (t, t_0, x_0, u)}{h}
    = \sD_{x_0} \phi (t, t_0, x_0, u) \delta x_0
\end{align}
for each~$t \in [t_0, t_1]$ and every initial condition~$\delta x(t_0)=\delta x_0 \in \bR^n$, when~$\delta u(t) \equiv \0_m$. This means that~$\sD_{x_0} \phi (t, t_0, x_0, u)$ is the transition matrix of the variational system~\eqref{eq:vsys} as a linear time-varying system. Namely, we have
\begin{subequations}
		\begin{align}
			\sD_{\tau} \sD_{x_0} \phi (\tau, t_0, x_0, u)
            &= \sD_x f(\tau, x(\tau), u(\tau)) 
            \sD_{x_0} \phi (\tau, t_0, x_0, u),
			\label{eq1:tm}\\
			\sD_{x_0} \phi (t_0, t_0, x_0, u) &= \I_n,
			\label{eq2:tm}\\
			\sD_{x_0} \phi (t, t_0, x_0, u)
			&= \sD_x \phi (t, \tau, \phi(\tau, t_0, x_0, u), u)
            \sD_{x_0} \phi (\tau, t_0, x_0, u)
			\label{eq3:tm}
		\end{align}
   	 \end{subequations}
     for each~$(t, \tau) \in \bR \times \bR$ such that~$t_0 \le \tau \le t \le t_1$. Thus, the solution to the variational system at each time~$t \in [t_0, t_1]$ is
\begin{align}\label{eq:vsys_sol}
	\delta x(t) 
    = \sD_{x_0} \phi (t, t_0, x_0, u) \delta x_0
	    + \int_{t_0}^{t} \bigl( \sD_{{x_\tau}} \phi (t, \tau, x_\tau, u) \sD_u f(\tau, x_\tau, u(\tau))\bigr)\bigr|_{x_\tau = \phi (\tau, t_0, x_0, u)} \delta u (\tau)\, d\tau
\end{align}
for every initial condition~$\delta x_0 \in \bR^n$ and every continuous input~$\delta u:\bR \to\bR^m$.

In the next subsection, we offer insight into the exponential stability of the variational system. As we show, this property guarantees the global existence of solutions.

\begin{lem}[Global Existence of Solution]\label{lem:fc}
    Suppose that Assumption~\ref{asm:vf} holds. Given exponential growth rate~$b\in \bR$ and overshoot constant~$k \in \bR_{\ge 1}$, suppose that for each~$(t_0, x_0) \in \bR \times \bR^n$ and every continuous~$u:\bR \to \bR^m$, 
    \begin{align}\label{eq:exbd}
        \| \sD_{x_0} \phi (t, t_0, x_0, u) \|_\cX \le k \e^{b(t - t_0)}
    \end{align}
    for all~$t \in [t_0, T^+(t_0,x_0,u))$, where~$T^+(t_0,x_0,u)$ is the forward escape time, i.e., $T^+(t_0,x_0,u) := \sup \{t \in \bR_{> t_0} : \phi (\tau, t_0, x_0, u) \mbox{ exists for all } \tau \in [t_0, t]\}$. Then, $T^+(t_0,x_0,u) = \infty$.
\end{lem}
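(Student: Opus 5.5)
Suppose, for contradiction, that $T := T^+(t_0,x_0,u) < \infty$ for some $(t_0,x_0)$ and continuous $u$. The plan is to show that $\phi(t,t_0,x_0,u)$ stays in a bounded set on $[t_0,T)$. Standard ODE theory then forbids finite escape: under Assumption~\ref{asm:vf}, $f$ is continuous on the compact set $[t_0,T]\times\overline{B}\times u([t_0,T])$, so the solution extends past $T$.

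To bound the trajectory, compare it with the trajectory starting from the origin, $y(t):=\phi(t,t_0,\0_n,u)$, and join the two by the segment $x_0^s := s x_0$, $s\in[0,1]$. The difficulty is that each $\phi(\cdot,t_0,sx_0,u)$ has its own escape time. So I would work on an interval $[t_0,\tau]$ on which all of them exist.

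\textbf{Step 1.} Fix $\tau<T$. Let $S_\tau$ be the set of $s\in[0,1]$ such that $\phi(\cdot,t_0,sx_0,u)$ exists on $[t_0,\tau]$.
\begin{itemize}
\item $1\in S_\tau$ by construction.
\item $S_\tau$ is relatively open, by continuous dependence on initial conditions (Remark~\ref{rem:sol}).
\item On $S_\tau$, the map $s\mapsto\phi(t,t_0,sx_0,u)$ is $C^1$ with derivative $\sD_{x_0}\phi(t,t_0,sx_0,u)\,x_0$. By \eqref{eq:exbd}, for all $s,s'$ in a connected component of $S_\tau$ containing $1$,
\begin{align*}
\|\phi(t,t_0,sx_0,u)-\phi(t,t_0,s'x_0,u)\|_\cX \le k\,\e^{b(t-t_0)}|s-s'|\,\|x_0\|_\cX .
\end{align*}
\item For closedness, take $s_j\to s^*$ inside the component. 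The trajectories $\phi(\cdot,t_0,s_jx_0,u)$ form a Cauchy family in $C([t_0,\tau])$ by the estimate above, so they stay in a bounded set uniformly in $j$. That bound keeps $\phi(\cdot,t_0,s^*x_0,u)$ bounded on its interval of existence within $[t_0,\tau]$, so it cannot escape before $\tau$, and $s^*\in S_\tau$.
\end{itemize}
By connectedness, $S_\tau=[0,1]$.

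\textbf{Step 2.} Integrating the derivative in $s$ over $[0,1]$ gives, for $t\in[t_0,\tau]$,
\begin{align*}
\|\phi(t,t_0,x_0,u)\|_\cX\le \|y(t)\|_\cX + k\,\e^{|b|(T-t_0)}\|x_0\|_\cX .
\end{align*}
Step 1 with $\tau$ arbitrary also shows that $y$ exists on all of $[t_0,T)$.

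\textbf{Step 3.} It remains to bound $y$, which is the same problem again. I would break the circularity by a restart argument. Let $\delta>0$ be a uniform local existence time for initial data in a compact ball around $\phi(t_0,t_0,x_0,u)=x_0$, available from Picard--Lindel\"of (Remark~\ref{rem:sol}) on a compact set. Restart the flow from times $t_0+j\delta'$ with $\delta'\le\delta$, and apply Step 2 with the base point at the restart times, using the cocycle property \eqref{eq3:tm}. This gives a Grönwall-type recursive bound, so the trajectory grows at most exponentially on $[t_0,T)$.

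\textbf{Main obstacle.} I expect the difficulty to be exactly here: making the uniform local existence time independent of the growing state. The plan is to define $B:=\sup_{[t_0,T)}\|\phi\|$ and argue by contradiction. Near $T$, the trajectory at time $T-\epsilon$ is compared via the segment argument with trajectories from nearby points. A cleaner alternative, which I would try first, is to use \eqref{eq:exbd} with restarted initial times. This gives $\|\sD_{x_0}\phi(t,\sigma,\xi,u)\|\le k\e^{|b|(T-t_0)}$ uniformly in $(\sigma,\xi)$. Consequently the flow maps $\xi\mapsto\phi(t,\sigma,\xi,u)$ are uniformly Lipschitz, and Cauchyness of $\phi(t,t_0,x_0,u)$ as $t\to T$ would follow from comparing $\phi(t,\sigma,x(\sigma),u)$ for different $\sigma$ near $T$.
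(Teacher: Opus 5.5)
Your Steps 1 and 2 are correct, and Step 1 proves more than you use. Running the segment argument from an arbitrary initial time $\sigma$ between two arbitrary states shows that, under \eqref{eq:exbd}, the escape time $T^+(\sigma,\xi,u)$ does not depend on $\xi$. The gap is Step 3, which is the actual content of the lemma, and your proposal leaves it open. Comparing with $y(t)=\phi(t,t_0,0,u)$ only moves the escape question to another trajectory. The obstacle you name, a local existence time uniform in the \emph{growing} state, shows that you intend to invoke local existence at $x(\sigma_j):=\phi(\sigma_j,t_0,x_0,u)$, and that is circular. The ``cleaner alternative'' is vacuous as written: by the cocycle property, $\phi(t,\sigma,x(\sigma),u)=x(t)$ for every $\sigma$, so comparing these for different $\sigma$ gives nothing. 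A uniform Lipschitz bound on the flow maps controls $x$ only relative to a reference trajectory that is already under control near $T$, and producing such a trajectory is exactly what is missing.

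The missing idea is to use local existence only at a \emph{fixed} state and let Step 1 transport it. Picard--Lindel\"of on the compact set $[t_0,T+1]\times\{\xi:\|\xi-x_0\|_\cX\le 1\}$ gives $\delta>0$ such that, for every $\sigma\in[t_0,T]$, $\phi(\cdot,\sigma,x_0,u)$ exists on $[\sigma,\sigma+\delta]$ and stays in that ball. Take $\sigma\in[t_0,T)$ with $\sigma>T-\delta$. Run Step 1 from initial time $\sigma$ on the segment joining $x_0$ (the endpoint known to exist) to $x(\sigma)$, with $\tau=\sigma+\delta$. Then $\phi(\cdot,\sigma,x(\sigma),u)$, which is the continuation of $x$, exists up to $\sigma+\delta>T$, a contradiction. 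Your restart recursion also works once the reference is restarted from $x_0$ rather than from the current state. With $\sigma_j:=t_0+j\delta$ and $d_j:=\|x(\sigma_j)-x_0\|_\cX$ you get $d_{j+1}\le 1+k\mathrm{e}^{|b|(T-t_0)}d_j$, and only finitely many steps occur before $T$.

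The paper takes a different route: it compares $x$ with the same fixed state $x_0$ launched at \emph{every} intermediate time $\tau$, via $\phi(t,t_0,x_0,u)-x_0=\int_{t_0}^t\sD_{x_0}\phi(t,\tau,x_0,u)f(\tau,x_0,u(\tau))\,d\tau$. This gives an explicit bound in one stroke and also the boundedness criterion of Remark~\ref{rem:bd}. However, it presupposes that $\phi(t,\tau,x_0,u)$ exists for all $\tau\in[t_0,t]$, which is exactly what your escape-time independence supplies.
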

\begin{pf}
    The proof is given in Section~\ref{pf:fc}. \qed
\end{pf}

\subsection{Main Results}
\begin{thm}\label{thm:ISS}
\textbf{\textup{(Input-to-State Stability Type Implications)}}
	Suppose that Assumption~\ref{asm:vf} holds.
    Given state norm~$\| \cdot \|_\cX$, input norm~$\| \cdot \|_\cU$, exponential growth rate~$b\in \bR$, overshoot constant~$k \in \bR_{\ge 1}$, and input Lipschitz constant~$\ell \in \bR_{\ge 0}$, consider the following four properties:

    \begin{enumerate}[label=(\arabic*)]
        \item\label{i1:ISS} (Infinitesimal Contraction Condition) 
        there exists a scalar-valued function~$V(t,x,\delta x)$, jointly lower semicontinuous in~$(t,x,\delta x) \in \bR \times \bR^n \times \bR^n$, such that
        \begin{subequations}\label{eq:ICC}
            \begin{align}
                &\| \delta x \|_\cX \le V(t,x,\delta x) \le k \| \delta x \|_\cX, \label{eq:ICC_bd}\\
                &\limsup_{s \to 0^+} \frac{V(t+s, \phi (t+s, t, x, u), \sD_x \phi (t+s, t, x, u) \delta x) - V(t,x,\delta x)}{s} \le b V(t,x,\delta x)  \label{eq:ICC_dini}
            \end{align}
        \end{subequations}
        for all~$(t, x, \delta x) \in \bR \times \bR^n \times \bR^n$ and continuous~$u:\bR \to \bR^m$;


        \item\label{i2:ISS} (Incremental Lyapunov Condition) 
        there exists a scalar-valued function~$W(t,x,y)$, jointly lower semicontinuous in~$(t,x,y) \in \bR \times \bR^n \times \bR^n$, such that
        \begin{subequations}\label{eq:ILC}
            \begin{align}
                &\| x - y \|_\cX \le W(t,x,y) \le k \| x - y \|_\cX, \label{eq:ILC_bd}\\
                &\limsup_{s \to 0^+} \frac{W(t+s, \phi (t+s, t, x, u),  \phi (t+s, t, y, u))-W(t,x,y) }{s} \le b W(t,x,y)  \label{eq:ILC_dini}
            \end{align}
        \end{subequations}
        for all~$(t, x, y) \in \bR \times \bR^n \times \bR^n$ and continuous~$u:\bR \to \bR^m$;

    
        \item\label{i3:ISS} (ISS Type Property of Variational System~\eqref{eq:vsys})
       	    \begin{align}\label{eq:UISS}
                \| \delta x(t) \|_{\cX}
            	\le k \e^{b (t - t_0)} \| \delta x_0 \|_\cX
                + k \ell \int_{t_0}^t \e^{b (t - \tau)} \|\delta u(\tau) \|_\cU \, d\tau
            \end{align}
            for all~$(t, t_0) \in \bR_{\ge t_0} \times \bR$, $(x_0, \delta x_0) \in \bR^n \times \bR^n$, and continuous $(u, \delta u): \bR \to \bR^m \times \bR^m$;

        
        \item\label{i4:ISS} (Incremental ISS Type Property of System~\eqref{eq:sys}) 
            \begin{align}\label{eq:IISS}
                \| \phi  (t, t_0,  y_0, v) - \phi  (t, t_0, x_0, u) \|_\cX
           		\le k \e^{b (t - t_0)} \| y_0 - x_0 \|_\cX
            	    + k \ell \int_{t_0}^t \e^{b (t - \tau)} \| v(\tau) - u(\tau) \|_\cU \, d\tau
       		\end{align}
        	for all~$(t, t_0) \in \bR_{\ge t_0} \times \bR$, $(x_0, y_0) \in \bR^n \times \bR^n$, and continuous $(u, v): \bR \to \bR^m \times \bR^m$.
    	\end{enumerate}
        Then, we have the implications in Fig.~\ref{fig:main}.
\end{thm}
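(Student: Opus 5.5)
The plan is to prove the implications through the variational system, using Theorem~\ref{thm:GD} and the transition-matrix identities \eqref{eq1:tm}--\eqref{eq3:tm}. Properties \ref{i1:ISS} and \ref{i2:ISS} do not involve $\ell$, so I will also use the ``input-free'' versions of \ref{i3:ISS} and \ref{i4:ISS}: these are \eqref{eq:UISS} with $\delta u\equiv\0_m$ and \eqref{eq:IISS} with $v=u$. My guess at the content of Fig.~\ref{fig:main} is:
\begin{itemize}
\item \ref{i1:ISS}$\Leftrightarrow$\ref{i2:ISS} and \ref{i3:ISS}$\Leftrightarrow$\ref{i4:ISS};
\item \ref{i1:ISS} together with $\ell\ge\|\sD_u f\|_{\ux}$ implies \ref{i3:ISS};
\item \ref{i3:ISS} implies \ref{i1:ISS} together with the necessary bound $k\ell\ge\|\sD_u f\|_{\ux}$.
\end{itemize}
The two conditions on $\ell$ are the ``differing only in the input Lipschitz constant'' of the abstract.

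\emph{Step 1: \ref{i1:ISS}$\Rightarrow$\ref{i3:ISS} when $\ell\ge\|\sD_u f\|_{\ux}$.*/}
Fix $u$ and $x_0$, and set $g(s):=V(s,\phi(s,\tau,x,u),\sD_x\phi(s,\tau,x,u)\delta x)$. By \eqref{eq3:tm}, the Dini condition \eqref{eq:ICC_dini} applied at each point of the trajectory gives $D^+g(s)\le b\,g(s)$. A comparison lemma for upper Dini derivatives then yields $g(t)\le \e^{b(t-\tau)}g(\tau)$. Using \eqref{eq:ICC_bd} this becomes $\|\sD_x\phi(t,\tau,x,u)\|_\cX\le k\e^{b(t-\tau)}$. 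Lemma~\ref{lem:fc} then gives global existence. Inserting this bound into the variation-of-constants formula \eqref{eq:vsys_sol}, and bounding $\|\sD_u f\,\delta u\|_\cX\le\ell\|\delta u\|_\cU$, gives \eqref{eq:UISS}. The delicate point is the comparison argument: $g$ is only lower semicontinuous, not continuous. I would handle it by the standard argument for functions with $D^+g\le bg$, showing that $\e^{-bs}(g(s)-\varepsilon)$ cannot cross a level upward, using right-continuity along the flow where available.

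\emph{Step 2: \ref{i3:ISS}$\Rightarrow$\ref{i1:ISS}.*/}
Define
\begin{align*}
V(t,x,\delta x):=\sup_{u}\sup_{s\ge0}\e^{-bs}\|\sD_x\phi(t+s,t,x,u)\delta x\|_\cX .
\end{align*}
Taking $s=0$ gives the lower bound in \eqref{eq:ICC_bd}, and \eqref{eq:UISS} with $\delta u\equiv\0_m$ gives the upper bound. For the decrease condition, \eqref{eq3:tm} and the concatenation of inputs give
\begin{align*}
V(t+s,\phi(t+s,t,x,u),\sD_x\phi(t+s,t,x,u)\delta x)\le \e^{bs}V(t,x,\delta x).
\end{align*}
This yields \eqref{eq:ICC_dini} with limsup at most $b V(t,x,\delta x)$. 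Lower semicontinuity follows because $V$ is a supremum of functions that are continuous in $(t,x,\delta x)$, by Remark~\ref{rem:sol}.

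The same construction with $W(t,x,y):=\sup_{u,s\ge0}\e^{-bs}\|\phi(t+s,t,x,u)-\phi(t+s,t,y,u)\|_\cX$ gives the input-free version of \ref{i4:ISS}$\Rightarrow$\ref{i2:ISS}. The argument of Step~1 applied to $W$ gives \ref{i2:ISS}$\Rightarrow$ input-free \ref{i4:ISS}.

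\emph{Step 3: \ref{i3:ISS}$\Leftrightarrow$\ref{i4:ISS}.*/}
For \ref{i4:ISS}$\Rightarrow$\ref{i3:ISS}, substitute $y_0=x_0+h\delta x_0$ and $v=u+h\delta u$ into \eqref{eq:IISS}, divide by $h$, and let $h\to0^+$ using Theorem~\ref{thm:GD}.

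For \ref{i3:ISS}$\Rightarrow$\ref{i4:ISS}, interpolate linearly: set $x_\theta=x_0+\theta(y_0-x_0)$ and $u_\theta=u+\theta(v-u)$ for $\theta\in[0,1]$. By Theorem~\ref{thm:GD}, the right derivative of $\theta\mapsto\phi(t,t_0,x_\theta,u_\theta)$ is the variational solution along $\phi(\cdot,t_0,x_\theta,u_\theta)$ with data $(y_0-x_0,\,v-u)$. This derivative is continuous in $\theta$ by continuous dependence. Integrating over $\theta\in[0,1]$ and applying \eqref{eq:UISS} pointwise in $\theta$ gives \eqref{eq:IISS}. Global existence again comes from Lemma~\ref{lem:fc}.

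Combining the input-free versions of Steps~1--3 closes \ref{i1:ISS}$\Leftrightarrow$\ref{i2:ISS}.

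\emph{Step 4: necessity of $k\ell\ge\|\sD_u f\|_{\ux}$.*/}
Take $\delta x_0=\0_n$ and a constant $\delta u$. By \eqref{eq:vsys_sol}, $\delta x(t)=(t-t_0)\sD_u f(t_0,x_0,u(t_0))\delta u+o(t-t_0)$. Dividing \eqref{eq:UISS} by $t-t_0$ and letting $t\to t_0^+$ gives
\begin{align*}
\|\sD_u f(t_0,x_0,u(t_0))\delta u\|_\cX\le k\ell\|\delta u\|_\cU .
\end{align*}
Since $t_0$, $x_0$ and $u(t_0)$ are arbitrary, this is the bound. I expect the main obstacle overall to be the Dini comparison under mere lower semicontinuity and possible finite escape, which Step~1 and Step~2 both rely on.
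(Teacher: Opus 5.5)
Your proposal is correct and follows essentially the same route as the paper. In particular:
\begin{itemize}
\item for \ref{i1:ISS}$\Rightarrow$\ref{i3:ISS} you use the Dini-derivative comparison, then Lemma~\ref{lem:fc} for global existence, then the variation-of-constants formula \eqref{eq:vsys_sol};
\item for the converse directions you use the same supremum constructions of $V$ and $W$ as in \eqref{pf4:ISS} and \eqref{pf9:ISS};
\item for \ref{i3:ISS}$\Leftrightarrow$\ref{i4:ISS} you use linear interpolation and differentiation via Theorem~\ref{thm:GD};
\item for $\|\sD_u f\|_{\ux}\le k\ell$ you take the small-time limit with $\delta x_0=\0_n$;
\item you close \ref{i1:ISS}$\Leftrightarrow$\ref{i2:ISS} through the input-free incremental bound \eqref{pf8:ISS}.
\end{itemize}

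The only slip is in your sketched comparison argument. The function $\mathrm{e}^{-bs}(g(s)-\varepsilon)$ has upper Dini derivative bounded only by $b\varepsilon\,\mathrm{e}^{-bs}$, which is positive when $b>0$. Perturb instead as in Lemma~\ref{lem:dini}, i.e., work with $\mathrm{e}^{-bs}g(s)-\varepsilon s$. There, lower semicontinuity of $g$ gives closed sublevel sets. Right-continuity of $g$ along the flow is then automatic, because lower semicontinuity combines with the finite upper Dini bound.
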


\begin{pf}
    The proof is given in Section~\ref{pf:ISS}. \qed
\end{pf}

\begin{figure*}[h]
\centering
\begin{align*}
\boxed{\;\ref{i1:ISS} \text{ and } \|\sD_u f\|_{\ux} \le \ell\;}
\;\Longrightarrow\;
&\boxed{\;\ref{i3:ISS}\;}
\;\Longleftrightarrow\;
\boxed{\;\ref{i4:ISS}\;}
\;\Longrightarrow\;
\boxed{\;\ref{i1:ISS} \text{ and } \|\sD_u f\|_{\ux} \le k\ell\;}
\\[3mm]
&\boxed{\;\ref{i1:ISS}\;}
\;\Longleftrightarrow\;
\boxed{\;\ref{i2:ISS}\;}
\end{align*}
\caption{Logical relations established in Theorem~\ref{thm:ISS} among the
infinitesimal contraction condition~\ref{i1:ISS}, the incremental Lyapunov condition~\ref{i2:ISS}, the ISS-type property of the variational system~\ref{i3:ISS}, and the incremental ISS-type property of the
original system~\ref{i4:ISS}. The input-gain bound is recovered up to the
overshoot factor~$k$ (from~$\ell$ on the left to~$k\ell$ on the right).}
\label{fig:main}
\end{figure*}

\begin{figure*}[h]
\centering
\begin{align*}
\boxed{\;\ref{i1:ISS} \text{ and } \|\sD_u f\|_{\ux} \le \ell\;}
\;\Longleftrightarrow\;
\boxed{\;\ref{i2:ISS} \text{ and } \|\sD_u f\|_{\ux} \le \ell\;}
\;\Longleftrightarrow\;
\boxed{\;\ref{i3:ISS}\;}
\;\Longleftrightarrow\;
\boxed{\;\ref{i4:ISS}\;}
\end{align*}
\caption{Logical equivalences established in Theorem~\ref{thm:ISS} when~$k=1$ among the
infinitesimal contraction condition~\ref{i1:ISS}, the incremental Lyapunov condition~\ref{i2:ISS}, the ISS-type property of the variational system~\ref{i3:ISS}, and the incremental ISS-type property of the
original system~\ref{i4:ISS}.}
\label{fig2:main}
\end{figure*}

When Theorem~\ref{thm:ISS} is applied to~$k=1$, we have the equivalences in Fig.~\ref{fig2:main}. 
Moreover, $V(t, x, \delta x)=\| \delta x \|_\cX$, and \eqref{eq:ICC_dini} reduces to the norm-based contractivity condition~\cite[Corollary 3.17]{FB:26}:
    \begin{align}\label{eq:lgn}
        &\mu_\cX (\sD_x f) := \sup_{(t,x,u) \in \bR \times \bR^n \times \bR^m} \mu_\cX (\sD_x f(t,x,u)) \le b \\
        &\quad \mu_\cX (\sD_x f(t,x,u)) := \lim_{h \to 0^+} \frac{\|\I_n + h\sD_x f (t, x, u)\|_\cX - 1}{h} 
        = \lim_{h \to 0^+} \frac{\|\sD_x \phi (t+h, t, x, u)\|_\cX - 1}{h}. \nonumber
    \end{align}

\begin{rem}[ISS Type Property]\label{rem:ISS}
    We refer to~\eqref{eq:UISS} and~\eqref{eq:IISS} as ISS-type properties because they relate to ISS. When~$b = -c \le 0$, the integral~$\int_{t_0}^t \e^{-c (t - \tau)} \| v(\tau) - u(\tau) \|_\cU \, d\tau$ admits two upper bounds:
    \begin{itemize}
        \item the signal~$L_\infty$-norm: $\frac{1}{c}\sup_{\tau \in [t_0,t]} \| v(\tau) - u(\tau) \|_\cU$;
        \item the signal~$L_1$-norm: $\int_{t_0}^t \| v(\tau) - u(\tau) \|_\cU \, d\tau$. 
    \end{itemize}
     Consequently,~\eqref{eq:IISS} implies both incremental ISS and integral ISS. For~$b > 0$, in contrast, the upper bound growth unbounded as~$t (> t_0)$ increases and gives a bound on the divergence rate over each finite time interval. Similar statements hold for~$\int_{t_0}^t \e^{-c (t - \tau)} \| \delta u(\tau) \|_\cU \, d\tau$.
    \red
\end{rem}

\begin{rem}[Boundedness of Solution]\label{rem:bd}
    By Lemma~\ref{lem:fc}, Theorem~\ref{thm:ISS} shows that if either of items~\ref{i1:ISS} -- \ref{i4:ISS} holds, then the solution~$\phi(t,t_0, x_0, u)$ exists and is unique for each~$(t,t_0,x_0) \in \bR_{\ge t_0} \times \bR \times \bR^n$ and every continuous~$u:\bR \to \bR^m$. Moreover, from the proof of Lemma~\ref{lem:fc}, the solution is bounded, i.e.,~$\limsup_{t \to \infty}\|\phi(t,t_0, x_0, u)\|_\cX < \infty$ if
    \begin{align}
        \limsup_{t \to \infty} \int_{t_0}^t \e^{b (t -\tau)} \| f(\tau,x_0,u(\tau)) \|_\cX \, d\tau < \infty.
    \end{align}
    This extends the local infinitesimal contractivity analysis~\cite[E3.13]{FB:26} from autonomous systems to time-varying and control systems.
    \red
\end{rem}

\begin{rem}[Continuity of~$V$ and~$W$]
    From the constructions of~$V$ and~$W$ in the proof of Theorem~\ref{thm:ISS}, items~\ref{i3:ISS} and~\ref{i4:ISS} do not necessary imply their continuity. However, the continuity can be guaranteed if i) we restrict the range of the inputs~$u$ and~$v$ into a compact set~$U \subset \bR^m$, and ii) we allow to obtain slightly conservative bounds in~\eqref{eq:ICC_dini} and~\eqref{eq:ILC_dini} by replacing~$b$ with an arbitrary constant~$a \in \bR_{>b}$. This is because, in this case, the supremums in~\eqref{pf4:ISS} and~\eqref{pf9:ISS} can be taken in compact sets of~$h$ and~$u(t) \in U$. \red
\end{rem}

\subsection{Exponential Growth Estimations}
As a special case, consider a non-control system:
\begin{align}\label{eq:sys_no_u}
	\dot x = g(t,x).
\end{align}
Its solution at time~$t \in \bR$ with the initial condition~$x(t_0) = x_0 \in \bR^n$ is denoted by~$x(t) = \psi (t, t_0, x_0)$. The variational system of~\eqref{eq:sys_no_u} along~$\psi (t, t_0, x_0)$ with the initial condition~$\delta x(t_0) = \delta x_0 \in \bR^n$ is
\begin{align}\label{eq:vsys_no_u}
	\dot {\delta x} = \bigl( \sD_x g(t, x_t) \delta x \bigr) \bigr|_{x_t =\psi (t, t_0, x_0)}.
\end{align} 
Theorem~\ref{thm:ISS} with~$\ell = 0$ yields the following corollary. 

\begin{cor}\label{cor:IES}
\textbf{\textup{(Exponential Growth Equivalences)}}
    Let the vector field~$g(t,x)$ and its Jacobian~$\sD_x g(t,x)$ exist and be jointly continuous in~$(t,x) \in \bR \times \bR^n$. Given state norm~$\| \cdot \|_\cX$, exponential growth rate~$b\in \bR$, and overshoot constant~$k \in \bR_{\ge 1}$, the following four properties are equivalent:
	
    \begin{enumerate}[label=(\arabic*)]
        \item\label{i1:IES} (Infinitesimal Contraction Condition)
            there exists a scalar-valued function~$V(t,x,\delta x)$, jointly lower semicontinuous in~$(t,x,\delta x) \in \bR \times \bR^n \times \bR^n$, such that
        \begin{subequations}
            \begin{align*}
                &\| \delta x \|_\cX \le V(t,x,\delta x) \le k \| \delta x \|_\cX, \\
                &\limsup_{s \to 0^+} \frac{V(t+s, \psi (t+s, t, x), \sD_x \psi (t+s, t, x) \delta x) - V(t,x,\delta x)}{s} \le b V(t,x,\delta x)
            \end{align*}
        \end{subequations}
        for all~$(t, x, \delta x) \in \bR \times \bR^n \times \bR^n$;


        \item\label{i2:IES} (Incremental Lyapunov Condition) 
        there exists a scalar-valued function~$W(t,x,y)$, jointly lower semicontinuous in~$(t,x,y) \in \bR \times \bR^n \times \bR^n$, such that
        \begin{subequations}
            \begin{align*}
                &\| x - y \|_\cX \le W(t,x,y) \le k \| x - y \|_\cX, \\
                &\limsup_{s \to 0^+} \frac{W(t+s, \psi (t+s, t, x),  \psi (t+s, t, y))-W(t,x,y) }{s} \le b W(t,x,y)
            \end{align*}
        \end{subequations}
        for all~$(t, x, y) \in \bR \times \bR^n \times \bR^n$;
        
        
        \item\label{i3:IES} (Exponential Estimation of Variational System~\eqref{eq:vsys})
            \begin{align*}
                \| \delta x(t) \|_\cX \le k \e^{b (t - t_0)} \| \delta x_0 \|_\cX
            \end{align*}
            for all~$(t, t_0) \in \bR_{\ge t_0} \times \bR$ and~$(x_0, \delta x_0) \in \bR^n \times \bR^n$;
        			

        \item\label{i4:IES}  (Incremental Exponential Estimation of System~\eqref{eq:sys})
             \begin{align}\label{eq:IES}
                \| \psi (t, t_0, y_0) - \psi (t, t_0, x_0) \|_\cX \le k \e^{b (t - t_0)} \| y_0 - x_0 \|_\cX
            \end{align}
            for all~$(t, t_0) \in \bR_{\ge t_0} \times \bR$ and~$(x_0, y_0) \in \bR^n \times \bR^n$. \red
        \end{enumerate}
\end{cor}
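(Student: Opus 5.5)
We obtain Corollary~\ref{cor:IES} by specializing Theorem~\ref{thm:ISS} to a control system whose vector field ignores the input. Concretely, we take $m$ arbitrary (say $m=1$) and define $f(t,x,u) := g(t,x)$ for all $(t,x,u) \in \bR \times \bR^n \times \bR^m$. Because $g$ and $\sD_x g$ are jointly continuous and $\sD_u f \equiv \0_{n\times m}$, Assumption~\ref{asm:vf} holds. For every continuous input $u$ we have $\phi(t,t_0,x_0,u) = \psi(t,t_0,x_0)$, and hence $\sD_x\phi(t+s,t,x,u) = \sD_x\psi(t+s,t,x)$. The variational system~\eqref{eq:vsys} then collapses to~\eqref{eq:vsys_no_u}, regardless of $u$ and $\delta u$.

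We apply Theorem~\ref{thm:ISS} with $\ell = 0$ and check that each item reduces to the corresponding item of the corollary. Items~\ref{i1:ISS} and~\ref{i2:ISS} involve $u$ only through $\phi$ and $\sD_x\phi$, so they coincide verbatim with items~\ref{i1:IES} and~\ref{i2:IES}. With $\ell=0$, the integral terms in~\eqref{eq:UISS} and~\eqref{eq:IISS} vanish, and the left-hand sides no longer depend on $u$, $v$, or $\delta u$. Thus~\ref{i3:ISS} becomes $\|\delta x(t)\|_\cX \le k\e^{b(t-t_0)}\|\delta x_0\|_\cX$, which is~\ref{i3:IES}. Likewise,~\ref{i4:ISS} becomes~\eqref{eq:IES}, which is~\ref{i4:IES}. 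The universal quantification over continuous inputs is harmless, since nothing depends on the input.

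It remains to read off the equivalences from Fig.~\ref{fig:main}. Since $\|\sD_u f\|_{\ux} = 0 \le \ell = 0$ and also $0 \le k\ell = 0$, the side conditions on the input Lipschitz constant hold automatically. The chain
\[
\ref{i1:ISS} \;\Rightarrow\; \ref{i3:ISS} \;\Leftrightarrow\; \ref{i4:ISS} \;\Rightarrow\; \ref{i1:ISS}
\]
therefore closes into a cycle of equivalences, and together with $\ref{i1:ISS} \Leftrightarrow \ref{i2:ISS}$ this gives the equivalence of all four items.

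The one point needing care is the identification $\phi = \psi$. We need that the solutions of $\dot x = f(t,x,u(t))$ and of $\dot x = g(t,x)$ coincide, including their forward escape times, so that the corollary's items are meaningful on the same domain. Uniqueness from Remark~\ref{rem:sol} gives this directly. Beyond that, the argument is purely a matter of bookkeeping.
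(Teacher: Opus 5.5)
Your proposal is correct and follows the paper's own route: the paper obtains Corollary~\ref{cor:IES} directly as Theorem~\ref{thm:ISS} with $\ell=0$. Your embedding $f(t,x,u):=g(t,x)$ gives $\sD_u f\equiv \0_{n\times m}$, so both side conditions $\|\sD_u f\|_{\ux}\le \ell$ and $\|\sD_u f\|_{\ux}\le k\ell$ hold trivially and the one-way chain of Fig.~\ref{fig:main} closes into equivalences, while your bookkeeping (items (1)--(4) reduce verbatim to the corollary's, with $\phi(t,t_0,x_0,u)=\psi(t,t_0,x_0)$ for every input) supplies the details the paper leaves implicit.
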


From Remark~\ref{rem:bd}, if i)~$b < 0$, and ii)~$g(t,x)$ is independent of~$t$, then the solution~$\psi (t, t_0, x_0)$ is bounded, which implies the existence of an equilibrium. We include this argument for the sake of self-containedness. Since~$\psi (t, t_0, x_0)$ is bounded, its $\omega$-limit set~$\omega (x_0)$ is compact. Fix~$z \in \omega(x_0)$ and take a sequence~$\{t_n\}$ such that~$\psi (t_n, t_0, x_0) \to z$ as~$n \to \infty$. Take~$y_0 = \psi (s, t_0, x_0)$ for an arbitrary~$s \in \bR_{\ge t_0}$. Then, the semigroup property and continuity of the flow and the time-invariance of the system yield
\begin{align*}
   \| \psi (s, t_0, z) - z \|_\cX 
   &= \lim_{n \to \infty} \| \psi (s, t_0, \psi (t_n, t_0, x_0)) - \psi (t_n, t_0, x_0) \|_\cX \\
   &=\lim_{n \to \infty} \| \psi (t_n, t_0, \psi (s, t_0, x_0)) - \psi (t_n, t_0, x_0) \|_\cX \overset{\eqref{eq:IES}}{=} 0.
\end{align*} 
Thus, $z$ is an equilibrium as this holds for each~$s \in \bR_{\ge t_0}$. Substituting~$y_0 = z$ into~\eqref{eq:IES} shows the global exponential stability of~$z$, and thus the equilibrium is unique. This generalizes~\cite[Theorem 3.9]{FB:26} for the overshoot constant~$k=1$ to arbitrary~$k \in \bR_{\ge 1}$.

\subsection{Comparative Analyses}\label{sec:ca}
We provide comparative analyses of this work with existing literature.

{\it Exponential Growth Estimation (Corollary~\ref{cor:IES}):}\vspace{-3mm}
\begin{itemize}
    \item The equivalences of items~\ref{i1:IES}, \ref{i2:IES}, and~\ref{i4:IES} can be found in~\cite[Theorem 29]{AD-SJ-FB:22} when the overshoot constant is~$k=1$ (contractivity with respect to a norm).

    \item The equivalence of items~\ref{i3:IES} and~\ref{i4:IES} can be found in~\cite[Proposition 1]{NB-RO-AP:19} by assuming the global existence of the solution. However, this does not show the implication for the same~$b$ and~$k$.

    \item The equivalences of items~\ref{i1:IES}, \ref{i3:IES} and~\ref{i4:IES} are established when norm~$\| \cdot \|_\cX$ is the~$\ell_2$-norm~\cite[Propositions~4]{VA-BJ-LP:16} and, for cooperative systems, are the~$\ell_1$- and~$\ell_\infty$-norms~\cite[Theorem 4.3]{YK-BB-MC:20} in the time-invariant case. Moreover, they show that~$V(x,\delta x)$ can be selected as quadratic~$(\delta x^\top P(x) \delta x)^{1/2}$ with symmetric and positive definite matrix-valued function~$P(x)$, sum-separable~$\sum_i v_i (x) \delta x_i$, and max-separable~$\max_i \{ \delta x_i/w_i(x) \}$ with element-wise positive vector-valued functions~$v(x)$ and~$w(x)$, respectively, by assuming i)~global existence of the solution, ii)~$b < 0$, iii)~$g$ is time independent and twice continuously differentiable, and iv)~$\sD_x g$ is bounded. However, they do not show the implications for the same~$b$ and~$k$.

    \item Similar equivalences as of items~\ref{i2:IES} and~\ref{i4:IES} can be found for incremental asymptotic stability in~\cite[Theorem~1]{DA:02} and~\cite[Theorem 2]{DA:09}.

    \item The equivalence of items~\ref{i1:IES} and~\ref{i2:IES} can be found in~\cite[Theorem 3.1]{YK-BB:24} by assuming i)~global existence of the solution, and ii)~$g$ is time independent and twice continuously differentiable.
\end{itemize}
Corollary~\ref{cor:IES} establishes the equivalences with respect to an arbitrary norm~$\| \cdot \|_\cX$, for the same~$b$ and~$k$, under Assumption~\ref{asm:vf} only. Note that the implication: item~\ref{i1:IES} $\implies$ item~\ref{i3:IES} implies that~$V(t,x,\delta x)$ is a Lyapunov function for the variational system~\eqref{eq:vsys} (at~$\delta u=\0_m$).

{\it Input-to-State Stability Type Implications (Theorem~\ref{thm:ISS}):}\vspace{-3mm}
\begin{itemize}
    \item The implication: (item~\ref{i1:ISS} and $\|\sD_u f\|_{\ux} \le \ell$) $\implies$ item~\ref{i4:ISS} is shown in~\cite[Corollary 3.17]{FB:26} for~$k=1$. For incremental ISS of~$\dot x = \bar f(x) + \bar g(x) u$ when~$\| \cdot \|_\cX$ is the~$\ell_2$-norm, this implication is established in~\cite[Theorem 2]{MG-DA-VA:23} by assuming that i)~$b<0$, ii)~$\bar f$ and~$\bar g$ are twice continuously differentiable, and iii)~$\bar g$ is bounded.
   
    \item The implication: item~\ref{i4:ISS} $\implies$ item~\ref{i3:ISS} is established for incremental ISS of~$\dot x = \bar f(x) + \bar g(x) u$, under the assumption that the first- and second-order derivatives of~$\bar f$ and~$\bar g$ exist and are bounded~\cite[Proposition~1]{MG-DA-VA:23}.

    \item Similar equivalences as of items~\ref{i2:ISS} and~\ref{i4:ISS} can be found in~\cite[Theorem 2]{DA:02} for incremental ISS and in~\cite[Theorem 1]{DA:09} for incremental integral ISS.
\end{itemize}
Theorem~\ref{thm:ISS} establishes the equivalences with respect to an arbitrary state norm~$\| \cdot \|_\cX$ and input norm~$\| \cdot \|_\cU$ under Assumption~\ref{asm:vf} only. More importantly, to the best of the authors' knowledge, the converse implication: item~\ref{i3:ISS} or~\ref{i4:ISS}~$\implies$ (item~\ref{i1:ISS} and $\|\sD_u f\|_{\ux} \le k \ell$) is not established in the literature. As a byproduct, we show that item~\ref{i3:ISS} or~\ref{i4:ISS} implies boundedness of~$\sD_u f$, and that boundedness of~$\sD_x f$ is not required in our analysis. Finally, by combining Theorem~\ref{thm:GD} with Remark~\ref{rem:ISS}, the chain of implications (item~\ref{i1:ISS} and $\|\sD_u f\|_{\ux} \le \ell$) $\iff$ (item~\ref{i2:ISS} and $\|\sD_u f\|_{\ux} \le \ell$) $\implies$ (item~\ref{i3:ISS}) $\iff$ (item~\ref{i4:ISS}) extends to incremental ISS and incremental integral ISS.



\section{Applications}

Corollary~\ref{cor:IES} with the overshoot constant~$k=1$ sheds light onto related existing concepts in
dynamical systems because they now can be understood in terms of variational systems.

\subsection{Sensitivity Matrix}
Given an initial condition $(t_0,x_0) \in \bR \times \bR^n$, the sensitivity matrix
of a nonlinear system~\eqref{eq:sys_no_u} at time $t \in \bR_{\ge t_0}$ is given by
    \begin{align*}
	   S(t, t_0, x_0) := \sD_{x_0} \psi (t, t_0, x_0).
    \end{align*}
    From~\eqref{eq:tm}, this is the transition matrix of the variational system~\eqref{eq:vsys_no_u}, i.e.,~$\delta x(t) = S(t, t_0, x_0) \delta x_0$.
    
    Suppose that~$b := \sup_{(t,x) \in \bR \times \bR^n}\mu_\cX (\sD_xg (t,x))$ is finite, where recall~\eqref{eq:lgn} for the definition of the log norm~$\mu_\cX$.
    According to the implication: item~\ref{i1:IES} $\implies$ item~\ref{i3:IES} in Corollary~\ref{cor:IES} for the overshoot constant~$k=1$, we have
    \begin{align}\label{eq:sns}
    	\| S(t, t_0, x_0) \|_\cX  = \| \sD_{x_0} \psi (t, t_0, x_0) \|_\cX   \le \e^{b (t - t_0)}, \quad \forall (t, t_0, x_0) \in \bR_{\ge t_0} \times \bR \times \bR^n.
    \end{align}
    This analysis can be generalized to a control system~\eqref{eq:sys}
    because Theorem~\ref{thm:GD} shows that the
    solution to the variational system~\eqref{eq:vsys} can be viewed as a
    sensitivity matrix with respect to small perturbations~$(x_0, u)
    \mapsto (x_0 + h \delta x_0, u + h \delta u)$.

    The sensitivity bound~\eqref{eq:sns} is closely related
      to the analysis on non-exploding gradients in contractive neural
      differential equations obtained in~\cite[Theorem~2]{MZ-LX-GFT:23}.

    \subsection{Lyapunov Characteristic Exponent}
    For a nonlinear system~\eqref{eq:sys_no_u}, the \emph{Lyapunov characteristic exponent}~\cite[Chapter 29.1]{SW:03} is defined by using the solution to its variational system~\eqref{eq:vsys_no_u} as follows:
    \begin{align*}
        \chi (t_0, x_0, \delta x_0)
        := \limsup_{t \to \infty} \frac{1}{t} 
            \ln \left( \frac{\| \sD_{x_0} \psi (t, t_0, x_0) \delta x_0\|_\cX}{\| \delta x_0 \|_\cX} \right).
    \end{align*}
    If~$b:= \sup_{(t,x) \in \bR \times \bR^n} \mu_\cX (\sD_x g(t,x))$ is finite, we have
    \begin{align*}
        \chi (t_0, x_0, \delta x_0) \le b,
        \quad
        \forall (t_0, x_0, \delta x_0) \in \bR \times \bR^n \times \bR^n.
    \end{align*}
    This can be confirmed as follows:
    \begin{align*}
        \chi (t_0, x_0, \delta x_0)
        \le \limsup_{t \to \infty} \frac{1}{t} \ln \| \sD_{x_0} \psi (t, t_0, x_0) \|_\cX
        \overset{\eqref{eq:sns}}{\le} b.
    \end{align*}
    Next, define the \emph{maximum Lyapunov exponent} by
        \begin{align*}
        \chi_{\max} (t_0, x_0)
        := \limsup_{t \to \infty} \frac{1}{t} \ln \sup_{\delta x_0\neq0}
            \frac{\| \sD_{x_0} \psi (t, t_0, x_0) \delta x_0\|_\cX}{\| \delta x_0 \|_\cX}.
    \end{align*}
    The maximal Lyapunov exponent quantifies the maximal asymptotic exponential
    rate of separation of trajectories starting arbitrarily close to $x_0$.
    The same calculations lead to the bound~$\sup_{(t_0,x_0)} \chi_{\max} (t_0, x_0) \le b$.


\section{Discrete-Time Case}

We investigate the discrete-time version of Theorem~\ref{thm:ISS}. Consider a discrete-time nonlinear control system:
\begin{align}\label{deq:sys}
	x(t+1) = f(t, x(t), u(t)),
\end{align}
where~$f: \bZ \times \bR^n \times \bR^m \to \bR^n$. Let~$\phi(t,t_0, x_0, u)$ denote the solution to the system~\eqref{deq:sys} at time instant~$t \in \bZ_{\ge t_0}$ with initial condition~$x(t_0) = x_0 \in \bR^n$ and input~$u:\bZ \to \bR^m$. Namely, $x(t) = \phi(t,t_0, x_0, u)$. As analogues to the continuous-time case, we impose the following assumption.

\begin{assum}[Continuous Differentiability of Vector Field in Discrete-time]\label{dasm:vf}
  For each~$t \in \bZ$, the vector field~$f(t,x,u)$ and its Jacobians $\sD_x f(t,x,u)$ and~$\sD_u f(t,x,u)$ exist and are jointly continuous in~$(x,u) \in \bR^n \times \bR^m$.~\red
\end{assum}

The variational system of~\eqref{deq:sys} along~$\phi (t, t_0, x_0, u)$ with the initial condition~$\delta x(t_0) = \delta x_0 \in \bR^n$ and input~$\delta u:\bZ \to \bR^m$ is
\begin{align}\label{deq:vsys}
	{\delta x} (t+1) = \bigl(\sD_x f(t, x_t, u(t)) \delta x(t) + \sD_u f(t, x_t, u(t)) \delta u (t)\bigr)\bigr|_{x_t=\phi (t, t_0, x_0, u)}.
\end{align}

In the discrete-time case, an analogue of Theorem~\ref{thm:GD} follows from the chain rule.
\begin{thm}[Solution of Variational system in Discrete-Time]\label{dthm:GD}
	Under Assumption~\ref{dasm:vf}, the solution~$\delta x(t)$ to the variational system~\eqref{deq:vsys} satisfies
    \begin{align}\label{deq:sol_GD}
        \delta x(t) = \left.\frac{\partial \phi (t, t_0, x_0 + h \delta x_0, u + h \delta u)}{\partial h}\right|_{h = 0}
	\end{align}
    for each~$(t, t_0) \in \bZ_{\ge t_0} \times \bZ$, every initial state~$(x_0, \delta x_0) \in \bZ \times \bR^n \times \bR^n$, and every input~$(u, \delta u):\bZ \to \bR^m \times \bR^m$.~\red
\end{thm}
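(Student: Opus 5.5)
The plan is to argue by induction on $t \in \bZ_{\ge t_0}$, using only the recursion \eqref{deq:sys}, the chain rule, and Assumption~\ref{dasm:vf}. Fix $(t_0, x_0, \delta x_0)$ and inputs $(u,\delta u)$. Define the one-parameter family of trajectories
\begin{align*}
\xi(t,h) := \phi (t, t_0, x_0 + h \delta x_0, u + h \delta u), \qquad h \in \bR .
\end{align*}
In discrete time, solutions always exist and are unique for all $t \in \bZ_{\ge t_0}$, since they are obtained by iterating~$f$. So no analogue of Lemma~\ref{lem:fc} or of the local existence interval in Theorem~\ref{thm:GD} is needed, and $\xi(t,h)$ is defined for every $h$. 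I will prove the following claim for each $t \ge t_0$: the map $h \mapsto \xi(t,h)$ is continuously differentiable, and $\partial_h \xi(t,h)|_{h=0} = \delta x(t)$, where $\delta x(t)$ solves \eqref{deq:vsys}.

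The base case $t=t_0$ is immediate. We have $\xi(t_0,h) = x_0 + h\delta x_0$, which is affine in~$h$, and its derivative is $\delta x_0 = \delta x(t_0)$.

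For the inductive step, the recursion gives
\begin{align*}
\xi(t+1,h) = f\bigl(t, \xi(t,h), u(t) + h \delta u(t)\bigr).
\end{align*}
By Assumption~\ref{dasm:vf}, for fixed~$t$ the partial Jacobians $\sD_x f(t,\cdot,\cdot)$ and $\sD_u f(t,\cdot,\cdot)$ exist and are jointly continuous. Hence $f(t,\cdot,\cdot)$ is Fréchet differentiable, hence $C^1$, in $(x,u)$. The map $h \mapsto (\xi(t,h), u(t)+h\delta u(t))$ is $C^1$ by the induction hypothesis. The chain rule then shows that $h \mapsto \xi(t+1,h)$ is $C^1$, with
\begin{align*}
\partial_h \xi(t+1,h) = \sD_x f\bigl(t,\xi(t,h),u(t)+h\delta u(t)\bigr)\,\partial_h \xi(t,h) + \sD_u f\bigl(t,\xi(t,h),u(t)+h\delta u(t)\bigr)\,\delta u(t).
\end{align*}
Evaluate this at $h=0$, using $\xi(t,0)=\phi(t,t_0,x_0,u)$ and the induction hypothesis $\partial_h\xi(t,0)=\delta x(t)$. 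The right-hand side becomes exactly the right-hand side of \eqref{deq:vsys}, so $\partial_h \xi(t+1,0) = \delta x(t+1)$. This closes the induction and yields \eqref{deq:sol_GD}.

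The only delicate point, though it is mild, is justifying the chain rule from the assumption as stated. Assumption~\ref{dasm:vf} gives continuous partial Jacobians, not total differentiability directly. I would invoke the standard fact that continuous partial derivatives imply continuous Fréchet differentiability. I would also carry "$C^1$ in $h$", rather than mere differentiability at $h=0$, through the induction. That is necessary because the chain rule at step $t+1$ needs differentiability of $\xi(t,\cdot)$ near $h=0$, and continuity of $\partial_h\xi(t,\cdot)$ keeps the composition $C^1$ at every step. Unlike the continuous-time Theorem~\ref{thm:GD}, no one-sided limit or Grönwall-type estimate is needed, since the derivative here is two-sided and exact.
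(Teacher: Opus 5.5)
Your proof is correct and matches the paper's approach: the paper only remarks that the result follows from the chain rule, and your induction on $t$ via $\xi(t+1,h) = f(t,\xi(t,h),u(t)+h\delta u(t))$ spells out exactly the argument that remark leaves implicit. One minor point is that carrying $C^1$ in $h$ is harmless but not actually necessary: the pointwise chain rule only needs $\xi(t,\cdot)$ to be differentiable at $h=0$ and $f(t,\cdot,\cdot)$ to be Fr\'echet differentiable at $(\phi(t,t_0,x_0,u),u(t))$.
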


As a counterpart of Theorem~\ref{thm:ISS}, we have the following implications.

\begin{thm}\label{dthm:ISS}
\textbf{\textup{(Input-to-State Stability Type Implications in Discrete-Time)}}
	Suppose that Assumption~\ref{dasm:vf} holds.
    Given state norm~$\| \cdot \|_\cX$, input norm~$\| \cdot \|_\cU$, growth factor~$b\in \bR_{>0}$, overshoot constant~$k \in \bR_{\ge 1}$, and input Lipschitz constant~$\ell \in \bR_{\ge 0}$, consider the following four properties:

    \begin{enumerate}[label=(\arabic*)]
        \item\label{di1:ISS} (Infinitesimal Contraction Condition) 
        there exists a scalar-valued function~$V(t, x, \delta x)$, jointly lower semicontinuous in~$(x,\delta x) \in \bR^n \times \bR^n$ for each~$t \in \bZ$, such that
        \begin{subequations}\label{deq:ICC}
            \begin{align}
                &\| \delta x \|_\cX \le V(t, x, \delta x) \le k \| \delta x \|_\cX, \label{deq:ICC_bd}\\
                &V (t+1, f(t, x, u), \sD_x f(t, x, u) \delta x ) \le b V(t,x,\delta x)  \label{deq:ICC_dini}
            \end{align}
        \end{subequations}
        for all~$(t, x, \delta x, u) \in \bZ \times \bR^n \times \bR^n \times \bR^m$;


        \item\label{di2:ISS} (Incremental Lyapunov Condition) 
        there exists a scalar-valued function~$W(t,x,y)$, jointly lower semicontinuous in~$(x,y) \in \bR^n \times \bR^n$ for each~$t \in \bZ$, such that
        \begin{subequations}\label{deq:ILC}
            \begin{align}
                &\| x - y \|_\cX \le W(t,x,y) \le k \| x - y \|_\cX, \label{deq:ILC_bd}\\
                &W(t+1, f(t, x, u), f(t, y, u)) \le b W(t,x,y)  \label{deq:ILC_dini}
            \end{align}
        \end{subequations}
        for all~$(t, x, y, u) \in \bZ \times \bR^n \times \bR^n \times \bR^m$;

    
        \item\label{di3:ISS} (ISS Type Property of Variational System~\eqref{deq:vsys})
       	    \begin{align}\label{deq:UISS}
                \| \delta x(t) \|_{\cX}
            	\le k b^{t - t_0} \| \delta x_0 \|_\cX
                + k \ell \sum_{\tau = t_0}^{t-1} b^{t - (\tau+1)} \|\delta u(\tau) \|_\cU
            \end{align}
            for all~$(t, t_0) \in \bZ_{\ge t_0} \times \bZ$, $(x_0, \delta x_0) \in \bR^n \times \bR^n$, and $(u, \delta u): \bZ \to \bR^m \times \bR^m$;

        
        \item\label{di4:ISS} (Incremental ISS Type Property of System~\eqref{deq:sys}) 
            \begin{align}\label{deq:IISS}
                \| \phi  (t, t_0,  y_0, v) - \phi  (t, t_0, x_0, u) \|_\cX
           		\le k b^{t - t_0} \| y_0 - x_0 \|_\cX
            	    + k \ell \sum_{\tau = t_0}^{t-1} b^{t - (\tau+1)} \| v(\tau) - u(\tau) \|_\cU
       		\end{align}
        	for all~$(t, t_0) \in \bZ_{\ge t_0} \times \bZ$, $(x_0, y_0) \in \bR^n \times \bR^n$, and $(u, v): \bZ \to \bR^m \times \bR^m$.
    	\end{enumerate}
        Then, we have the implications in Fig.~\ref{fig:main}.
\end{thm}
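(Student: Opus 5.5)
We prove the discrete-time implications of Fig.~\ref{fig:main} directly, mirroring the continuous-time argument; the absence of escape times and of Dini derivatives makes each step a finite induction or a finite supremum. Throughout, the transition matrix of \eqref{deq:vsys} is $\Phi(t,t_0):=\sD_{x_0}\phi(t,t_0,x_0,u)=\prod_{\tau=t_0}^{t-1}\sD_x f(\tau,x_\tau,u(\tau))$ by the chain rule. By Theorem~\ref{dthm:GD}, $\delta x(t)$ equals the derivative in $h$ of $\phi(t,t_0,x_0+h\delta x_0,u+h\delta u)$ at $h=0$.

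\textbf{(1) and $\|\sD_u f\|_{\ux}\le\ell$ $\Rightarrow$ (3).} Set $\delta x(t)=\delta x^{0}(t)+\delta x^{u}(t)$ with $\delta x^{0}(t)=\Phi(t,t_0)\delta x_0$. Iterating \eqref{deq:ICC_dini} gives $V(t,x(t),\delta x^0(t))\le b^{t-t_0}V(t_0,x_0,\delta x_0)$, hence $\|\Phi(t,t_0)\delta x_0\|_\cX\le k b^{t-t_0}\|\delta x_0\|_\cX$ by \eqref{deq:ICC_bd}. The variation of constants formula
\[
\delta x(t)=\Phi(t,t_0)\delta x_0+\sum_{\tau=t_0}^{t-1}\Phi(t,\tau+1)\,\sD_u f(\tau,x_\tau,u(\tau))\,\delta u(\tau)
\]
together with $\|\Phi(t,\tau+1)\|_\cX\le k b^{t-\tau-1}$ and $\|\sD_u f\|_{\ux}\le\ell$ then yields \eqref{deq:UISS}.

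\textbf{(3) $\Rightarrow$ (4).} Put $x_0^h=x_0+h(y_0-x_0)$ and $u^h=u+h(v-u)$ for $h\in[0,1]$. Then $\phi(t,t_0,y_0,v)-\phi(t,t_0,x_0,u)=\int_0^1\partial_h\phi(t,t_0,x_0^h,u^h)\,dh$. Each integrand is a variational solution along $(x_0^h,u^h)$ with data $(y_0-x_0,v-u)$, so \eqref{deq:UISS} applies uniformly in $h$. The function $h\mapsto\phi(t,t_0,x_0^h,u^h)$ is $C^1$ as a finite composition of $C^1$ maps, which justifies the fundamental theorem of calculus. Integrating over $h$ gives \eqref{deq:IISS}.

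\textbf{(4) $\Rightarrow$ (3).} Apply \eqref{deq:IISS} to $(x_0+h\delta x_0,u+h\delta u)$ versus $(x_0,u)$, divide by $h>0$, and let $h\to0^+$ using Theorem~\ref{dthm:GD}.

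\textbf{(3) $\Rightarrow$ (1) and $\|\sD_u f\|_{\ux}\le k\ell$.} Take $t=t_0+1$ and $\delta x_0=\0_n$ in \eqref{deq:UISS}. This gives $\|\sD_u f(t_0,x_0,u_0)\delta u\|_\cX\le k\ell\|\delta u\|_\cU$ for all arguments, i.e., the input Lipschitz bound. For the Lyapunov function, define
\[
V(t,x,\delta x):=\sup_{s\ge0,\;u}\; b^{-s}\|\Phi_u(t+s,t,x)\delta x\|_\cX ,
\]
where $\delta u=\0_m$. Taking $s=0$ gives $V\ge\|\delta x\|_\cX$, and \eqref{deq:UISS} gives $V\le k\|\delta x\|_\cX$. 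Because $\Phi$ factorizes through the one-step map, the supremum splits into a first step followed by the rest. This gives $V(t+1,f(t,x,u),\sD_x f(t,x,u)\delta x)\le bV(t,x,\delta x)$; note that the sup over $u$ beyond time $t$ is unrestricted, so the value at $t+1$ is covered. Lower semicontinuity holds because $V$ is a supremum of functions continuous in $(x,\delta x)$, each being a finite product of continuous Jacobians.

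\textbf{(1) $\Leftrightarrow$ (2).} For (2) $\Rightarrow$ (1), set
\[
V(t,x,\delta x):=\liminf_{h\to0^+}W(t,x,x+h\delta x)/h .
\]
The bounds are inherited from \eqref{deq:ILC_bd}. The decrease follows from \eqref{deq:ILC_dini} since $f(t,x+h\delta x,u)=f(t,x,u)+h\sD_x f\,\delta x+o(h)$, combined with a Lipschitz-type control of $W$ in its second argument. This control is the delicate point, since $W$ is only lower semicontinuous. To avoid it, I would instead route through (4): (2) gives the $\ell=0$ version of (4) by iteration, and the $\ell=0$ version of (3)$\Rightarrow$(1) above then constructs $V$.

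For (1) $\Rightarrow$ (2), I would likewise use (1) $\Rightarrow$ (3) $\Rightarrow$ (4) with $\ell=0$ and $\delta u=\0_m$. Then define
\[
W(t,x,y):=\sup_{s\ge0,\;u}\; b^{-s}\|\phi(t+s,t,y,u)-\phi(t+s,t,x,u)\|_\cX .
\]
The same splitting argument gives \eqref{deq:ILC}, and $W$ is lower semicontinuous because the flows are continuous.

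The main obstacle is the supremum constructions. One must check that the sup over all future inputs still gives exact one-step contraction for each fixed current input. One must also confirm that the restriction $b>0$ makes $b^{-s}$ well-defined, which is why the statement assumes $b\in\bR_{>0}$.
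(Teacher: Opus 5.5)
Your proposal is correct and follows the paper's proof essentially step by step: variation of constants with the bound $\|\Phi(t,\tau+1)\|_\cX\le k b^{t-\tau-1}$ for (1) $\Rightarrow$ (3) (a more explicit version of the paper's ``combining'' step), the line-integral and difference-quotient arguments for (3) $\Leftrightarrow$ (4), the choice $t=t_0+1$, $\delta x_0=\0_n$ for $\|\sD_u f\|_{\ux}\le k\ell$, and the same supremum constructions of $V$ and $W$, with (1) $\Leftrightarrow$ (2) routed through the same-input incremental estimate as in the paper. One wording fix is needed: iterating (2) yields (4) only with $v=u$ (hence (3) only with $\delta u=\0_m$), not an ``$\ell=0$ version of (4)'' for arbitrary $u\ne v$, which would force the flow to be input-independent; since your constructions of $V$ and $W$ use only the same-input estimate, the argument is unaffected.
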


Similarly to the continuous-time case, when Theorem~\ref{dthm:ISS} is applied to~$k=1$, we have the equivalences in Fig.~\ref{fig2:main}. Moreover, $V(t, x, \delta x)=\| \delta x \|_\cX$, and \eqref{deq:ICC_dini} reduces to~$\|\sD_x f\|_\cX \le b$.

\begin{rem}[Boundedness of Solution in Discrete-Time]\label{drem:bd}
    Suppose that either of items~\ref{di1:ISS}--\ref{di4:ISS} in Theorem~\ref{dthm:ISS} holds. Then, the solution~$\phi(t,t_0,x_0, u)$ is bounded if
    \begin{align}\label{deq:sol_bd}
        \limsup_{t \to \infty} \sum_{\tau = t_0}^{t-1} b^{t - (\tau+1)} \| f(\tau, x_0, u(\tau)) - x_0 \|_\cX < \infty.
    \end{align}
    We show this by using~\eqref{deq:IISS}. It follows that
    \begin{align*}
        \phi (t, t_0, x_0, u) - x_0 
        &= \sum_{\tau = t_0}^{t-1} ( \phi (t, \tau, x_0, u) - \phi (t, \tau + 1, x_0, u) ) \\
        &= \sum_{\tau = t_0}^{t-1} (  \phi (t, \tau+1, f(\tau, x_0, u(\tau)), u) - \phi (t, \tau + 1, x_0, u) ).
    \end{align*}
    Taking the norm and using the triangular inequality give
    \begin{align}\label{deq1:sol_bd}
        \| \phi (t, t_0, x_0, u) - x_0 \|_\cX
        \le \sum_{\tau = t_0}^{t-1} \| \phi (t, \tau+1, f(\tau, x_0, u(\tau)), u) - \phi (t, \tau + 1, x_0, u) \|_\cX.
    \end{align}    
    Substituting~$t_0 = \tau + 1 (\le t)$, $y_0 = f(\tau, x_0, u(\tau))$, and~$u \equiv v$ into~\eqref{deq:IISS} yields
    \begin{align}\label{deq2:sol_bd}
        \| \phi  (t, \tau+1, f(\tau, x_0, u(\tau)), v) - \phi  (t, \tau+1, x_0, u) \|_\cX
        \le k b^{t - (\tau+1)} \| f(\tau, x_0, u(\tau)) - x_0 \|_\cX.    
    \end{align}
    Combining~\eqref{deq1:sol_bd} and~\eqref{deq2:sol_bd} leads to
     \begin{align*}
        \| \phi (t, t_0, x_0, u) - x_0 \|_\cX
        \le k \sum_{\tau = t_0}^{t-1} b^{t - (\tau+1)} \| f(\tau, x_0, u(\tau)) - x_0 \|_\cX.
    \end{align*}
    Thus, the solution is bounded if~\eqref{deq:sol_bd} holds.
    \red
\end{rem}

Selecting~$\ell = 0$ in Theorem~\ref{dthm:ISS} yields the similar equivalences as Corollary~\ref{cor:IES} for the exponential growth estimation. Moreover, if i)~$f(t,x,u)$ depends on~$x$ only, and ii)~$b \in [0, 1)$, then~\eqref{deq:sol_bd} holds, and thus the solution is bounded. Also, the system has a globally exponentially stable equilibrium. This can be viewed as an extension of the Banach contraction theorem~\cite[Theorem 1.6]{FB:26} from the overshoot constant~$k=1$ to an arbitrary~$k \in \bR_{\ge 0}$.

At the end of this section, we provide comparative analyses of Theorem~\ref{dthm:ISS} with existing literature.
\begin{itemize}
    \item When~$\ell = 0$, the equivalences of items~\ref{di1:ISS}, \ref{di2:ISS} and~\ref{di4:ISS} are found when norm~$\| \cdot \|_\cX$ is the~$\ell_2$-norm~\cite[Theorems 11 and 15]{DNT-BSR-CMK:19} and~\cite[Corollary 3.8]{YK-YH:24}, and, for cooperative systems, are the~$\ell_1$- and~$\ell_\infty$-norms~\cite[Corollary 6.4 and Theorem 6.5]{YK-YH:25}, where the invertibility of~$f$ with respect to~$x$ is further required for the~$\ell_\infty$-norm. Although they show that if~$b \in (0, 1)$,~$V(t, x,\delta x)$ can be selected as quadratic, sum-separable, and max-separable, respectively, they do not show the implications for the same~$b$ and~$k$. The equivalence of items~\ref{di1:ISS} and~\ref{di2:ISS} for the same~$b$ and~$k$ can be found in~\cite[Corollary 3.6]{YK-BB:24}.

    \item A similar equivalences as of items~\ref{di2:ISS} and \ref{di4:ISS} can be found for incremental ISS (with the signal~$L_\infty$-norm) in~\cite[Theorem 24]{DNT-BSR-CMK:16}.
\end{itemize}
Theorem~\ref{dthm:ISS} establishes the implications for the same~$b$ and~$k$ with respect to an arbitrary state norm~$\| \cdot \|_\cX$ and input norm~$\| \cdot \|_\cU$. To the best of the authors’ knowledge, for the considered ISS-type properties, this is the first paper that establishes the converse implications: item~\ref{di3:ISS} or~\ref{di4:ISS} $\implies$ item~\ref{di1:ISS} or~\ref{di2:ISS} and $\|\sD_u f\|_{\ux} \le k\ell$.


\section{Proofs}\label{sec:pf}
\subsection{Proof of Theorem~\ref{thm:GD}}\label{pf:GD}
    For the sake of simplicity, write~$\phi (t) := \phi (t, t_0, x_0, u)$
    and~$\phi_h (t) := \phi (t, t_0, x_0 + h \delta x_0, u + h \delta u)$,
    and define
    \begin{align}\label{pf1:GD}
        z_h(t) := \frac{\phi_h(t) - \phi(t)}{h}, \quad h > 0.
    \end{align}
    From the integral representations of~$\phi(t)$ and~$\phi_h(t)=\phi(t)+h z_h(t)$, we have
    \begin{align}\label{pf2:GD}
        z_h(t)
        &= \delta x_0
        + \int_{t_0}^{t} \frac{1}{h} \bigl(
            f(\tau, \phi_h(\tau), u(\tau) + h \delta u (\tau)) - f(\tau, \phi(\tau), u(\tau))
        \bigr) d\tau
        \nonumber\\
        &=
        \delta x_0
        + \int_{t_0}^{t} \frac{1}{h} \bigl(
            f(\tau, \phi(\tau) + h z_h(\tau), u(\tau) + h \delta u (\tau))
       - f(\tau, \phi(\tau), u(\tau))
        \bigr) \, d\tau.
    \end{align}
    Note that for each~$\delta x_0$ and every continuous~$\delta u:\bR \to \bR^m$, there exists~$t_1 \in \bR_{> t_0}$ and~$h_1 \in \bR_{>0}$ such that~$\phi_h(t)$ and thus~$z_h(t)$ exist for all~$t \in [t_0, t_1]$ and~$h \in (0, h_1]$.
    We show~$z_h(t_1) \to \delta x(t_1)$ as~$h \to 0^+$ in two steps.

    {\bf Step~1: Uniform bound on~$z_h$.}
    Let~$\| \cdot \|$ denote a vector norm and its induced matrix norm.
    From Assumption~\ref{asm:vf}, $\phi(t)$ is continuous on~$[t_0, t_1]$.
    Fix~$R > 0$, and define a compact tubular neighborhood:
    \begin{align*}
        \cK_R := \bigl\{
            (t, y, v) \in \bR \times \bR^n \times \bR^m :
            t \in [t_0, t_1], 
            \| y - \phi(t) \| \le R,
             \| v - u(t) \| \le R \bigr\}.
    \end{align*}
    By joint continuity of~$\sD_x f(t,x,u)$ and~$\sD_u f(t,x,u)$ on the
    compact set~$\cK_R$, there exists~$M_R \ge 0$ such that
    \begin{align}\label{pf3:GD}
        \max \{ \| \sD_x f(t, x, u) \|, \| \sD_u f(t, x, u) \| \}
        \le M_R
    \end{align}
    for all~$(t,x,u) \in \cK_R$.
    Define the first exit time
    \begin{align*}
        T_h (\delta x_0, \delta u)
        := \inf \bigl\{
            t \in [t_0, t_1] : (t, \phi_h(t), u(t) + h \delta u(t)) \notin \cK_R \bigr\}
    \end{align*}
    with~$T_h (\delta x_0, \delta u) := t_1$ if no exit occurs. For
    each~$t \in [t_0, T_h (\delta x_0, \delta u)]$, applying the triangle
    inequality and the mean value theorem to~\eqref{pf2:GD} yields
    \begin{align*}
        \| z_h(t) \|
        \le
        \| \delta x_0 \|
        + M_R \int_{t_0}^{t}
            \bigl( \|z_h(\tau)\| + \|\delta u (\tau)\| \bigr) \, d\tau.
    \end{align*}
    By~\eqref{pf1:GD} and the Gr\"onwall--Bellman inequality, we obtain
    \begin{align}\label{pf4:GD}
        \| z_h(t) \| \le \beta (t_1,R), \quad
        \forall t \in [t_0, T_h (\delta x_0, \delta u)]
    \end{align}
    for all~$\| \delta x_0 \| \le R$ and $\sup_{\tau \in [t_0, t_1]} \| \delta u(\tau) \| \le R$, where
    \begin{align}\label{pf5:GD}
        \beta (t_1,R) :=  R \e^{M_R (t_1 - t_0)} (  1  + M_R (t_1 - t_0) )
    \end{align}
    is independent of~$h$. For all~$t \in [t_0, T_h(\delta x_0,\delta u)]$ and~$h \in (0, h^*(t_1,R)]$ with~$h^*(t_1,R):= \min\{R/\beta (t_1,R), h_1\}$,
    \begin{align*}
        \|\phi_h(t) - \phi(t)\|
        \overset{\eqref{pf1:GD}}{=} h\|z_h(t)\| 
        \overset{\eqref{pf4:GD}}{\le} h \beta(t_1,R) 
        \le h^*(t_1,R) \beta(t_1,R) 
        = R
    \end{align*}
    and
    \begin{align}\label{pf6:GD}
        \|h \delta u(t)\|
        \le h^*(t_1,R) R
         = \frac{R^2}{\beta(t_1,R)} \overset{\eqref{pf5:GD}}{\le} R.
    \end{align}
    Thus, $h \in (0, h^*(t_1,R)]$ implies~$T_h(\delta x_0, \delta u) = t_1$, and
    \begin{align}\label{pf7:GD}
        \| z_h(t) \| \le \beta (t_1,R), \quad \forall t \in [t_0, t_1].
    \end{align}

    {\bf Step~2: Convergence~$z_h(t_1) \to \delta x(t_1)$.}
    Define~$e_h(t) := z_h(t) - \delta x(t)$, where~$e_h(t_0) = 0$.
    Subtracting the integral form of~\eqref{eq:vsys} from~\eqref{pf2:GD}, we obtain
    \begin{align}\label{pf8:GD}
        e_h(t)
        =  \int_{t_0}^{t} \bigl( \sD_x f(\tau,\phi(\tau),u(\tau)) e_h(\tau) + r_h(\tau) \bigr) \, d\tau,
    \end{align}
    where~$t \in [t_0, t_1]$ and
    \begin{align*}
        r_h(\tau) 
        &:= \frac{1}{h} \bigl( f(\tau, \phi(\tau) + h z_h(\tau), u(\tau) + h \delta u (\tau)) - f(\tau, \phi(\tau), u(\tau)) \bigr) \\
        &\qquad - \sD_x f(\tau,\phi(\tau),u(\tau)) z_h(\tau) 
            - \sD_u f(\tau,\phi(\tau),u(\tau)) \delta u(\tau).
    \end{align*}
    We show~$r_h(\tau) \to 0$ as~$h \to 0^+$ uniformly in~$\tau \in [t_0,t_1]$. 
    
    The fundamental theorem of calculus yields
    \begin{align*}
        &\frac{1}{h} \bigl( f(\tau,\phi(\tau)+hz_h(\tau), u(\tau)+h\delta u(\tau))
              - f(\tau,\phi(\tau),u(\tau)) \bigr)
        \\
        &= \int_0^1 \bigl(
            \sD_x f(\tau, \phi(\tau) + s h z_h(\tau), u(\tau) + s h\delta u(\tau)) z_h(\tau)
            + \sD_u f(\tau, \phi(\tau) + s h z_h(\tau), u(\tau) + s h\delta u(\tau)) \delta u(\tau)
        \bigr) \; ds.
    \end{align*}
    Using this, $r_h(\tau)$ can be rearranged as
    \begin{align}\label{pf9:GD}
        r_h(\tau)
        &=
        \int_0^1
        \Bigl(
            \bigl(
                \sD_x f(\tau,\phi(\tau) + s h z_h(\tau), u(\tau) + s h \delta u(\tau)) 
                - \sD_x f(\tau, \phi(\tau), u(\tau)) \bigr) z_h(\tau) \nonumber\\
                &\qquad \qquad + \bigl(
                \sD_u f(\tau,\phi(\tau) + s h z_h(\tau), u(\tau) + s h \delta u(\tau))  - \sD_u f(\tau, \phi(\tau), u(\tau)) \bigr) \delta u(\tau)
        \Bigr) \; ds.
    \end{align}
    
    Since~$\sD_x f(t,x,u)$ and~$\sD_u f(t,x,u)$ are jointly continuous on compact~$\cK_R$, they are uniformly continuous on~$\cK_R$. From~\eqref{pf7:GD},
    $(\tau, \phi(\tau) + s h z_h(\tau), u(\tau) + s h\delta u(\tau)) \in \cK_R$
    for all~$\tau \in [t_0,t_1]$, $s \in [0,1]$, and~$h \in (0,h^*(t_1,R)]$. Thus, their common modulus of continuity 
    \begin{align*}
        \omega_R(\sigma) 
         := \sup \Bigl\{ \max \bigl\{
            \|\sD_x f(t,x,u) - \sD_x f(t,x',u')\|, \|\sD_u f(t,x,u) - \sD_u f(t,x',u')\| \bigr\} : \qquad \\
         (t,x,u),(t,x',u') \in \cK_R, \|(x-x', u-u')\| \le \sigma \Bigr\}
    \end{align*}
    satisfies~$\omega_R(\sigma) \to 0$ as~$\sigma \to 0^+$.

    Let~$(x,u) = (\phi(\tau) + s h z_h(\tau), u(\tau) + s h \delta u(\tau))$ and~$(x', u') = (\phi(\tau), u(\tau))$. It follows from~\eqref{pf6:GD} and~\eqref{pf7:GD} that, for all~$s \in [0,1]$,
    \begin{align}\label{pf10:GD}
        \|(sh z_h(\tau),  sh \delta u(\tau))\| 
        \le h (\beta(t_1,R) + R ).
    \end{align}
    Thus, for all~$\tau \in [t_0,t_1]$ and~$h \in (0, h^*(t_1,R)]$, we have
    \begin{align}\label{pf11:GD}
        \|r_h(\tau)\| 
        &\overset{\eqref{pf9:GD},\eqref{pf10:GD}}{\le} \omega_R( h (\beta(t_1,R) + R ))  (\| z_h(\tau) \| + \| \delta u(\tau) \|)\nonumber\\
        &\overset{\eqref{pf7:GD}, \|\delta u(\tau)\| \le R}{\le} \omega_R( h (\beta(t_1,R) + R )) (\beta(t_1,R) + R) 
        =: \rho_h,
    \end{align}
    where~$\rho_h \to 0$ as~$h \to 0^+$, uniformly in~$\tau \in [t_0,t_1]$.
    
    Finally, it follows that
    \begin{align*}
        \| e_h(t) \| 
        \overset{\eqref{pf3:GD},\eqref{pf8:GD},\eqref{pf11:GD}}{\le} M_R \int_{t_0}^t \| e_h(\tau) \| \; d\tau + \rho_h (t - t_0).
    \end{align*}
    By the Gr\"onwall--Bellman inequality, we obtain
    \begin{align}\label{pf12:GD}
        \| e_h(t) \| \le \rho_h (t - t_0) \operatorname{e}^{M_R (t-t_0)},
        \quad \forall t \in [t_0, t_1].
    \end{align}
    Since $\rho_h \to 0$ as~$h \to 0^+$, uniformly in~$\tau \in [t_0,t_1]$, evaluating this at~$t = t_1$ gives~$\| e_h(t_1) \| \to 0$, i.e., $z_h(t_1) \to \delta x(t_1)$ as~$h \to 0^+$.
    \qed


\subsection{Proof of Lemma~\ref{lem:fc}}\label{pf:fc}
Fix~$x_0 \in \bR^n$ and continuous~$u:\bR \to \bR^m$, define~$\phi^u (t, \tau, x_0) := \phi (t, \tau, x_0, u)$. From the semigroup property of the flow, $\phi^u(t,s,\phi^u(s,\tau,x_0))=\phi^u(t,\tau,x_0)$ holds as long as it exists. Taking its partial derivative with respect to~$s$ and evaluating it at~$s=\tau$ yield
\begin{align}
    \frac{\partial \phi^u(t,\tau,x_0)}{\partial \tau} 
    = - \sD_{x_0}\phi^u(t,\tau,x_0) f(\tau,x_0,u(\tau)) 
    = - \sD_{x_0}\phi (t,\tau,x_0, u) f(\tau,x_0,u(\tau)).
\end{align}
Taking the integration with respect to~$\tau$ in~$[t_0,t]$ and using the fundamental theorem of calculus give
\begin{align}\label{pf1:fc}
    \phi (t,t_0,x_0, u) - x_0
    = - \int_{t_0}^t \frac{\partial \phi^u(t,\tau,x_0)}{\partial \tau} \, d\tau
    = \int_{t_0}^t \sD_{x_0} \phi (t,\tau,x_0, u) f(\tau,x_0,u(\tau))  \, d\tau .
\end{align}
From~\eqref{eq:UISS} with~$\delta u(t) \equiv \0_m$, we have, for all~$t_1 \in [t_0, t]$
\begin{align}\label{pf2:fc}
    \| \phi (t,t_1,x_0, u) - x_0 \|_\cX
    &\overset{\eqref{pf1:fc}}{\le} \int_{t_1}^t \| \sD_{x_0} \phi (t,\tau,x_0, u) \|_\cX \, \| f(\tau,x_0,u(\tau)) \|_\cX \, d\tau \nonumber\\
    &\overset{\eqref{eq:UISS}}{\le} k \int_{t_1}^t \e^{b (t -\tau)} \| f(\tau,x_0,u(\tau)) \|_\cX \, d\tau \nonumber\\
    &\; \le k \int_{t_0}^t \e^{b (t -\tau)} \| f(\tau,x_0,u(\tau)) \|_\cX \, d\tau =: R(t_0, t).
\end{align}
Since~$f(\tau, x_0, u(\tau))$ is a continuous function of~$\tau \in \bR$, we have~$R(t_0, t) \in \bR_{\ge 0}$ for all~$(t_0, t) \in \bR \times \bR$ such that~$t \ge t_0$.

Now, we prove the statement by contradiction. Suppose~$T^+ = T^+(t_0, x_0, u) < \infty$. Define~$\rho (t_0):=\sup_{t\in[t_0,T^{+}]}R(t_0, t)<\infty$, and compact set~$B_\cX(x_0, \rho) := \{x \in \bR^n: \|x -x_0 \|_\cX \le \rho \}$. From~\eqref{pf2:fc}, $\phi (t,t_1,x_0, u) \in B_\cX(x_0, \rho (t_0))$ for every $t\in[t_0,T^+)$ and every $t_1 \in[t_0,t]$. Thus, the family of trajectories remains in the compact set~$K:=[t_0,T^+]\times B_\cX(x_0,\rho (t_0))$. By the Picard--Lindel\"of theorem, there exists~$s = s(K) \in \bR_{> 0}$ such that every solution starting from~$K$ extends to~$s$. Applying this to~$\phi(t,t_0,x_0,u)$ at time~$t<T^+$ with~$T^+ - t < s$ continues the solution beyond $T^+$, contradicting the maximality of~$T^+$.
\qed


\subsection{Proof of Theorem~\ref{thm:ISS}}\label{pf:ISS}

The proof relies on the following auxiliary lemma.
\begin{lem}\label{lem:dini}
    For lower semicontinuous function~$w:[0,T] \to \bR$, if $\limsup_{s \to 0^+} \frac{w(t+s)-w(t)}{s} \le 0$ for all~$t \in [0,T]$, then~$w(t_1) \le w(t_0)$ for all~$t_1 \in [t_0,T]$ and~$t_0 \in [0, T]$.
\end{lem}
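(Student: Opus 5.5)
Since $w$ is only lower semicontinuous and satisfies only an upper Dini bound, I will prove the statement by an $\varepsilon$-perturbation plus a ``sup of good times'' (continuous induction) argument. The goal is to show that the perturbed function never rises above its starting value, which in the limit gives the monotonicity claim.

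Fix $t_0 \in [0,T]$ and $\varepsilon > 0$, and define
\begin{align*}
w_\varepsilon(t) := w(t) - \varepsilon (t - t_0).
\end{align*}
This function is again lower semicontinuous and satisfies
\begin{align*}
\limsup_{s\to 0^+} \frac{w_\varepsilon(t+s)-w_\varepsilon(t)}{s} \le -\varepsilon < 0
\end{align*}
for every $t \in [t_0,T)$. It suffices to show $w_\varepsilon(t_1) \le w_\varepsilon(t_0) = w(t_0)$ for all $t_1 \in [t_0,T]$. Then $w(t_1) \le w(t_0) + \varepsilon(t_1-t_0)$, and letting $\varepsilon \to 0^+$ gives the claim.

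Consider the set $S := \{ t \in [t_0,T] : w_\varepsilon(\tau) \le w_\varepsilon(t_0) \ \forall \tau \in [t_0,t]\}$, which contains $t_0$, and let $t^* := \sup S$. I will argue in three steps.
\begin{itemize}
\item[(i)] $t^* \in S$. For $\tau < t^*$ there is some $t \in S$ with $t > \tau$, so $w_\varepsilon(\tau) \le w_\varepsilon(t_0)$. At $\tau = t^*$ itself, lower semicontinuity gives $w_\varepsilon(t^*) \le \liminf_{\tau \to t^{*-}} w_\varepsilon(\tau) \le w_\varepsilon(t_0)$. If $t^* = t_0$ this is trivial.
\item[(ii)] If $t^* < T$, the strict Dini inequality at $t^*$ yields $\delta > 0$ such that $w_\varepsilon(t^*+s) < w_\varepsilon(t^*) \le w_\varepsilon(t_0)$ for all $s \in (0,\delta)$. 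Then $t^* + \delta/2 \in S$, contradicting the definition of $t^*$.
\item[(iii)] Hence $t^* = T$, which gives the desired inequality on $[t_0,T]$.
\end{itemize}

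The step requiring the most care is (i). Lower semicontinuity is exactly what allows the inequality to pass to the supremum from the left; upper semicontinuity would not. The $\varepsilon$-perturbation is what makes step (ii) work: the hypothesis only gives $\limsup \le 0$, which does not by itself ensure a strict local decrease. At $t = T$ the Dini condition is vacuous, but it is never needed there.
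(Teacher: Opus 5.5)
Your proof is correct and follows essentially the paper's route: a linear $\varepsilon$-perturbation that makes the upper Dini derivative strictly negative, combined with a supremum argument in which lower semicontinuity ensures the supremum is attained. The differences are cosmetic. The paper argues by contradiction with a specific $\varepsilon$ and takes the supremum of the sublevel set $\{\xi\le 0\}\cap[t_0,t_1]$, the last good time before $t_1$. You instead argue directly for arbitrary $\varepsilon$ (choosing $\delta\le T-t^*$ in step (ii)), take the supremum of the maximal initial good interval, and then let $\varepsilon\to 0^+$.
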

\begin{pf}
    We prove the statement by contradiction. Suppose that there exist~$t_1 \in (t_0,T]$ and~$t_0 \in [0, T)$ such that~$w(t_0) < w(t_1)$. Define
    \begin{align*}
        \varepsilon := \frac{w(t_1)-w(t_0)}{2(t_1-t_0)}>0,   
    \end{align*}
    and~$\xi(t):=w(t)-w(t_0) - \varepsilon (t-t_0)$. Then,~$\xi(t)$ is lower semicontinuous on~$[0,T]$, and satisfies~$\xi(t_0) = 0$, $\xi(t_1) > 0$, and
    \begin{align}\label{pf:dini}
       \limsup_{s \to 0^+} \frac{\xi(t+s)-\xi(t)}{s} \le - \varepsilon.
    \end{align}
    Next, define~$S := \{t \in [t_0,t_1]: \xi(t) \le 0\}$. By lower semicontinuity of~$\xi(t)$, $S$ is closed. Also,~$t_0 \in S$ and~$t_1 \notin S$. Define~$\tau := \sup \{t \in [t_0, t_1] : t \in S\}$. Then,~$\xi(\tau) \le 0$ and~$\xi(t) > 0$ for all~$t \in (\tau, t_1]$. Consider two cases: 
    \begin{itemize}
        \item if~$\xi(\tau) = 0$, then
        \begin{align*}
            \frac{\xi(\tau + s) - \xi(\tau)}{s} > 0
        \end{align*}
        for all~$s\in(0,t_1-\tau]$;

        \item if~$\xi(\tau) < 0$, then
        \begin{align*}
            \limsup_{s \to 0^+} \frac{\xi(\tau + s) - \xi(\tau)}{s} > \limsup_{s \to 0^+} \frac{- \xi(\tau)}{s} = \infty.
        \end{align*}
    \end{itemize}
    Both cases contradict~\eqref{pf:dini}. By contradiction,~$w(t_1) \le w(t_0)$ for all~$t_0, t_1 \in [0,T]$ such that~$t_1 \ge t_0$.
    \qed
\end{pf}

Now, we are ready to provide the proof of Theorem~\ref{thm:ISS}.

     (item~\ref{i1:ISS} + $\|\sD_u f\|_{\ux} \le \ell$ $\implies$ item~\ref{i3:ISS})
     Define~$w(t) = \e^{-b t} V(t)$. Compute
    \begin{align*}
        \limsup_{s \to 0^+} \frac{w(t+s) - w(t)}{s}
        &\; = \limsup_{s \to 0^+} \frac{\e^{-b (t+s)} V(t+s) - \e^{-b t} V(t)}{s} \\
        &\; \le \limsup_{s \to 0^+} \frac{\e^{-b (t+s)} V(t+s) - \e^{-b t} V(t+s)}{s} 
        + \limsup_{s \to 0^+} \frac{\e^{-b t} V(t+s) - \e^{-b t} V(t)}{s} \\
        &\overset{\eqref{eq:ICC_dini}}{\le} -b \e^{-b t} V(t) + b \e^{-b t} V(t) = 0.
    \end{align*}
    From Lemma~\ref{lem:dini}, we have~$\e^{-b t} V(t) \le \e^{-b t_0} V(t_0)$, i.e.,
    \begin{align}\label{pf0:ISS}
        V(t) \le \e^{b (t - t_0)} V(t_0), \quad \forall t \ge t_0.
    \end{align}
    The bound~\eqref{eq:ICC_bd} yields~\eqref{eq:exbd}.
    From Lemma~\ref{lem:fc}, $T^+(t_0, x_0, u) = \infty$, i.e., there is no finite escape time for each~$(t_0, x_0) \in \bR \times \bR^n$ and every continuous~$u:\bR \to \bR^m$.
     
     Taking the norm of~\eqref{eq:vsys_sol} and applying~\eqref{eq:exbd} and $\|\sD_u f\|_{\ux} \le \ell$ lead to
     \begin{align*}
	 \| \delta x(t) \|_\cX
    &\le \|\sD_{x_0} \phi (t, t_0, x_0, u) \delta x_0\|_\cX
	    + \int_{t_0}^t \bigl( \| \sD_{x_\tau} \phi (t, \tau, x_\tau, u)
    \sD_u f(\tau, x_\tau, u(\tau)) \delta u (\tau)\|_\cX \bigr) \bigr|_{x_\tau = \phi (\tau, t_0, x_0, u)} \; d\tau \nonumber\\
    &\le \|\sD_{x_0} \phi (t, t_0, x_0, u) \delta x_0\|_\cX
	    + \int_{t_0}^t \bigl( \| \sD_{x_\tau} \phi (t, \tau, x_\tau, u)\|_\cX \,
    \|\sD_u f(\tau, x_\tau, u(\tau))\|_{\ux}\bigr) \bigr|_{x_\tau = \phi (\tau, t_0, x_0, u)} \, \|\delta u (\tau)\|_\cU  \; d\tau \nonumber\\
    &\overset{\eqref{eq:exbd}, \|\sD_u f\|_{\ux} \le \ell}{\le}
    k \e^{b (t - t_0)} \| \delta x_0 \|_\cX
        + k \ell \int_{t_0}^t \e^{b (t - \tau)} \|\delta u(\tau) \|_\cU \, d\tau,
    \end{align*}
    i.e.,~\eqref{eq:UISS}.


       (item~\ref{i3:ISS} $\implies$ item~\ref{i4:ISS})
       From Lemma~\ref{lem:fc}, $T^+(t_0, x_0, u) = \infty$.
        Let~$\gamma (s) = s y_0 + (1-s) x_0$ and~$\nu (s) := s v + (1-s) u$.  By the fundamental theorem of calculus and the triangle inequality,
    \begin{align}\label{pf2:ISS}
        \| \phi (t, t_0, y_0, v) - \phi (t, t_0, x_0, u) \|_\cX 
         = \left\| \int_0^1 \frac{\partial \phi (t, t_0, \gamma(s), \nu(s))}{\partial s}  \, ds \right\|_\cX
         \le \int_0^1 \left\| \frac{\partial \phi (t, t_0, \gamma(s), \nu(s))}{\partial s}  \right\|_\cX ds.
    \end{align}
    By Theorem~\ref{thm:GD}, $\frac{\partial\phi(t,t_0,\gamma(s),\nu(s))}{\partial s}$ equals~$\delta x(t)$ along~$\phi(t,t_0,\gamma(s),\nu(s))$ with~$\delta x_0 = \frac{d\gamma (s)}{ds} = y_0 - x_0$
    and~$\delta u = \frac{d\nu (s)}{ds} = v - u$. Applying item~\ref{i3:ISS} gives
    \begin{align}\label{pf3:ISS}
        \left\| \frac{\partial \phi (t, t_0, \gamma(s), \nu(s))}{\partial s} \right\|_\cX
        \le k \e^{b (t - t_0)} \| y_0 - x_0 \|_\cX
            + k \ell \int_{t_0}^t \e^{b (t - \tau)} \| v(\tau) - u(\tau) \|_\cU \, d\tau.
    \end{align}
    Since the right-hand side is independent of~$s \in [0,1]$, substituting~\eqref{pf3:ISS} into~\eqref{pf2:ISS} yields~\eqref{eq:IISS}.


     (item~\ref{i4:ISS} $\implies$ item~\ref{i3:ISS})
     Fix $t\in[t_0,T^+(t_0,x_0,u))$ and $\delta x_0\in\bR^n$. There exists~$h_0>0$ such that the perturbed flow $\phi(t,t_0,x_0+h\delta x_0,u)$ exists for each~$h\in(0,h_0]$ on~$[t_0,t]$. From the continuity of norm~$\|\cdot\|_\cX$ and Theorem~\ref{thm:GD}, we have
    \begin{align*}
     \|\sD_{x_0}\phi(t,t_0,x_0,u) \delta x_0\|_\cX
     = \lim_{h\to0^+}
        \frac{\|\phi(t,t_0,x_0+h\delta x_0,u)-\phi(t,t_0,x_0,u)\|_\cX}{h}.
    \end{align*}
    Substituting $y_0=x_0+h\delta x_0$ and $v\equiv u$ into \eqref{eq:IISS} and dividing by~$h$ give
    \begin{align*}
       \|\sD_{x_0}\phi(t,t_0,x_0,u) \delta x_0\|_\cX \le k \e^{b(t-t_0)}\,\|\delta x_0\|_\cX.
    \end{align*}
    Since~$\delta x_0\in\bR^n$ and~$t\in[t_0,T^+(t_0,x_0,u))$ are arbitrary, we obtain~\eqref{eq:exbd} for all~$t\in[t_0,T^+(t_0,x_0,u))$.
    From Lemma~\ref{lem:fc}, $T^+(t_0, x_0, u) = \infty$.
     
    From~\eqref{eq:sol_GD} and the continuity of the norm~$\| \cdot \|_\cX$, it follows that
        \begin{align*}
            \|\delta x(t)\|_\cX
            = \lim_{h \to 0^+}
                \frac{\|\phi (t, t_0, x_0 + h \delta x_0, u + h \delta u) - \phi (t, t_0, x_0, u)\|_\cX}{h}.
        \end{align*}
    	Next, substituting~$y_0 = x_0 + h \delta x_0$ and~$v = u + h \delta u$ into~\eqref{eq:IISS}, dividing by~$h$, and taking the limit leads to
    	\begin{align*}
            \lim_{h \to 0^+}
                \frac{\| \phi (t, t_0, x_0 + h \delta x_0, u + h \delta u) - \phi (t, t_0, x_0, u) \|_\cX}{h}
        	\le k \e^{b (t - t_0)} \| \delta x_0 \|_\cX
            	+ k \ell \int_{t_0}^t \e^{b (t - \tau)} \| \delta u(\tau) \|_\cU \, d\tau.
	    \end{align*}    	    
        Combining these two yields~\eqref{eq:UISS}.

    (item~\ref{i3:ISS} $\implies$ item~\ref{i1:ISS} + $\|\sD_u f\|_{\ux} \le k \ell$)
    From Lemma~\ref{lem:fc}, $T^+(t_0, x_0, u) = \infty$. Define
    \begin{align}\label{pf4:ISS}
       V(t_0, x_0, \delta x_0)
       := \sup_{\substack{h \ge 0 \\ u:[t_0,t_0+h]\to \bR^m}} \e^{-b h} \| \sD_{x_0} \phi (t_0+h, t_0, x_0, u) \delta x_0 \|_\cX.
    \end{align}
    Since this is a supremum of a jointly continuous function, this is jointly lower semicontinuous in~$(t_0, x_0, \delta x_0)$.
    Taking~$h=0$ gives~$\| \delta x_0 \|_\cX \le V(t_0, x_0, \delta x_0)$. From~\eqref{eq:vsys_sol} and item~\ref{i3:ISS} with~$\delta u \equiv \0_m$, we have~$V(t_0, x_0, \delta x_0) \le  k \| \delta x_0\|_\cX$. Combining these two yields~\eqref{eq:ICC_bd}. 

    Next, we show~\eqref{eq:ICC_dini}. Compute
    \begin{align*}
        &V(t_0+s, \phi(t_0+s,t_0,x_0,u), \sD_{x_0} \phi (t_0+s, t_0, x_0, u) \delta x_0) \nonumber\\
       &\quad \; = \sup_{\substack{h \ge 0 \\ u:[t_0+s,t_0+s+h]\to \bR^m}} \e^{-b h} \| \sD_x \phi (t_0+s+h, t_0+s,\phi(t_0+s,t_0,x_0, u), u) \delta x(t_0+s) \|_\cX \nonumber\\
        &\quad \overset{\eqref{eq3:tm}}{=} \sup_{\substack{h \ge 0 \\ u:[t_0+s,t_0+s+h]\to \bR^m}} \e^{-b h} \| \sD_{x_0} \phi (t_0+s+h, t_0, x_0, u) \delta x_0 \|_\cX.
    \end{align*}
    Introducing~$\tau:= s + h \ge s$, we have
    \begin{align*}
        &V(t_0+s, \phi(t_0+s,t_0,x_0,u), \sD_{x_0} \phi (t_0+s, t_0, x_0, u) \delta x_0) \nonumber\\
        &\quad = \e^{b s} \sup_{\substack{\tau \ge s \\ u:[t_0+s,t_0+\tau]\to \bR^m}} \e^{-b \tau} \| \sD_{x_0} \phi (t_0+\tau, t_0, x_0, u) \delta x_0 \|_\cX \nonumber\\
       &\quad \le \e^{b s} \sup_{\substack{\tau \ge 0 \\ u:[t_0,t_0+\tau]\to \bR^m}} \e^{-b \tau} \| \sD_{x_0} \phi (t_0+\tau, t_0, x_0,u) \delta x_0 \|_\cX 
       \overset{\eqref{pf4:ISS}}{=} \e^{b s} V(t_0, x_0, \delta x_0).
    \end{align*}
    Subtracting~$V(t_0, x_0, \delta x_0)$, dividing by~$s$, and taking the limit superior as~$s\to 0^+$, we obtain~\eqref{eq:ICC_dini}.

    Finally, we show~$\|\sD_u f\|_{\ux} \le k \ell$.
    Substituting~$t = t_0 + h$ with~$h>0$,~$\delta x_0 = \0_n$, and~$\delta u(t) \equiv w \in \bR^m$ into~\eqref{eq:vsys_sol} and~\eqref{eq:UISS} yields
    	\begin{align}\label{pf5:ISS}
     		&\Biggl\| \int_{t_0}^{t_0 + h} \bigl( \sD_{x_\tau} \phi (t_0+h, \tau, x_\tau, u) 
            \sD_u f(\tau, x_\tau, u(\tau)) \bigr)\bigr|_{x_\tau = \phi (\tau, t_0, x_0, u)} \, d \tau \, w  \Biggr\|_\cX
            \le k \ell \| w \|_\cU \int_{t_0}^{t_0 + h} \e^{b (t_0 + h - \tau)} d\tau.
        \end{align}
        Dividing the term inside the norm of the left-hand side by~$h$ and taking the limit superior leads to
        \begin{align}\label{pf6:ISS}
              &\lim_{h \to 0^+} \frac{1}{h} \int_{t_0}^{t_0 + h} \bigl(\sD_{x_\tau} \phi (t_0+h, \tau, x_\tau, u)
             \sD_u f(\tau, x_\tau, u(\tau))\bigr)\bigr|_{x_\tau = \phi (\tau, t_0, x_0, u)} d \tau
            \nonumber\\
            &\quad =  \sD_{x_0} \phi (t_0, t_0, x_0, u) \sD_u f(t_0, x_0, u(t_0))
            \overset{\eqref{eq2:tm}}{=}
                \sD_u f(t_0, x_0, u(t_0)).
        \end{align}
        Also, dividing the right-hand side of~\eqref{pf5:ISS} by~$h$ and taking the limit superior leads to
        \begin{align}\label{pf7:ISS}
            k \ell \| w \|_\cU \limsup_{h \to 0^+} \frac{1}{h} \int_{t_0}^{t_0 + h} \e^{b (t_0 + h - \tau)} d\tau
            = k \ell \| w \|_\cU.
    	\end{align}
        The continuity of the norm~$\| \cdot \|_\cU$ and~\eqref{pf5:ISS}--\eqref{pf7:ISS} imply
        \begin{align*}
            \| \sD_u f(t_0, x_0, u(t_0)) w \|_\cX \le k \ell \| w \|_\cU .
    	\end{align*}
        Since~$w \in \bR^m$ is arbitrary, we have~$\|\sD_u f\|_{\ux} \le k \ell$. 

       (item~\ref{i1:ISS} $\implies$ item~\ref{i2:ISS})
        From the above proofs, one notices that item~\ref{i1:ISS} is equivalent to
        \begin{align}\label{pf8:ISS}
                \| \phi  (t, t_0,  y_0, u) - \phi  (t, t_0, x_0, u) \|_\cX
           		\le k \e^{b (t - t_0)} \| y_0 - x_0 \|_\cX
       		\end{align}
        	for all~$(t, t_0) \in \bR_{\ge t_0} \times \bR$, $(x_0, y_0) \in \bR^n \times \bR^n$, and continuous~$u: \bR \to \bR^m$. We show that item~\ref{i2:ISS} is also equivalent to~\eqref{pf8:ISS}.

        By the same argument that~\eqref{eq:ICC_dini} yields~\eqref{pf0:ISS}, \eqref{eq:ILC_dini} gives~$W(t) \le \e^{b (t - t_0)} W(t_0)$. Thus, \eqref{eq:ILC_bd} implies~\eqref{pf8:ISS}. Conversely, define
    \begin{align}\label{pf9:ISS}
       W(t_0, x_0, y_0)
       := \sup_{\substack{h \ge 0 \\ u:[t_0,t_0+h]\to \bR^m}} \e^{-b h} \|  \phi (t_0+h, t_0, x_0, u) - \phi (t_0+h, t_0, y_0, u) \|_\cX.
    \end{align}
    Then, one can confirm that~$W(t_0, x_0, \delta y_0)$ satisfies~\eqref{eq:ILC} by the same reasoning that~$V(t_0, x_0, \delta x_0)$ in~\eqref{pf4:ISS} satisfies~\eqref{eq:ICC}.
    \qed


\subsection{Proof of Theorem~\ref{dthm:ISS}}
     (item~\ref{di1:ISS} + $\|\sD_u f\|_{\ux} \le \ell$ $\implies$ item~\ref{di3:ISS})
     Recursively applying~\eqref{deq:ICC_dini} and utilizing the bound~\eqref{deq:ICC_bd} yield
    \begin{align}\label{dpf0:ISS}
        \| \delta x(t) \|_\cX \le k b^{t-t_0} \| \delta x_0\|_\cX
    \end{align}
    for~$\delta u = \0_m$.

    Next, taking the norm of~\eqref{deq:vsys} yields
    \begin{align}\label{dpf1:ISS}
        \| \delta x(t+1) \|_\cX 
        &= \bigl( \|\sD_x f(t, x_t, u(t)) \delta x(t) + \sD_u f(t, x_t, u(t)) \delta u (t) \|_\cX \bigr)\bigr|_{x_t=\phi (t, t_0, x_0, u)}  \nonumber\\
        &\le \bigl(\|\sD_x f(t, x_t, u(t))\|_\cX \, \|\delta x(t)\|_\cX + \|\sD_u f(t, x_t, u(t))\|_\ux \, \|\delta u (t)\|_\cU\bigr)\bigr|_{x_t=\phi (t, t_0, x_0, u)} \nonumber\\
        &\overset{\|\sD_u f\|_{\ux} \le \ell}{\le}
        \bigl(\|\sD_x f(t, x_t, u(t))\|_\cX \, \|\delta x(t)\|_\cX\bigr)\bigr|_{x_t=\phi (t, t_0, x_0, u)} + \ell \|\delta u (t)\|_\cU.
    \end{align}
    Combining~\eqref{dpf0:ISS} and~\eqref{dpf1:ISS} concludes~\eqref{deq:UISS}.


       (item~\ref{di3:ISS} $\iff$ item~\ref{di4:ISS})
       This equivalence can be shown similarly to the proof of Theorem~\ref{thm:ISS}.


    (item~\ref{di3:ISS} $\implies$ item~\ref{di1:ISS} + $\|\sD_u f\|_{\ux} \le k \ell$)
    For~$(t, x, \delta x) \in \bZ \times \bR^n \times \bR^n$, define
    \begin{align}\label{dpf4:ISS}
       V(t, x, \delta x)
       := \sup_{\substack{h \in \bZ_{\ge 0} \\ u:[t,t+h] \cap \bZ \to \bR^m}} \frac{ \| \sD_x \phi (t+h, t, x, u) \delta x \|_\cX}{b^h}.
    \end{align}
    Since this is a supremum of a jointly continuous function, this is jointly lower semicontinuous in~$(x, \delta x)$.
    Taking~$h=0$ gives~$\| \delta x \|_\cX \le V(t, x, \delta x)$. From~\eqref{deq:sol_GD} and item~\ref{di3:ISS} with~$\delta u \equiv \0_m$, we have~$V(t, x, \delta x) \le  k \| \delta x\|_\cX$. Combining these two yields~\eqref{deq:ICC_bd}. 

    Next, we show~\eqref{deq:ICC_dini}. Compute
    \begin{align*}
        V\left(t+1, f(t, x, u), \sD_x f(t, x, u) \delta x + \sD_u f(t, x, u) \delta u \right)
       = \sup_{\substack{h \in \bZ_{\ge 0} \\ u:[t+1,t+h+1] \cap \bZ \to \bR^m}} \frac{ \| \sD_x \phi (t+h+1, t, x, u) \delta x \|_\cX}{b^h}.
    \end{align*}
    Introducing~$\tau:= h + 1 \ge 1$, we have
    \begin{align*}
        V\left(t+1, f(t, x, u), \sD_x f(t, x, u) \delta x + \sD_u f(t, x, u) \delta u \right) 
        & = b \sup_{\substack{\tau \in \bZ_{\ge 1} \\ u:[t+1,\tau] \cap \bZ \to \bR^m}} \frac{ \| \sD_x \phi (t+\tau, t, x, u) \delta x \|_\cX}{b^\tau} \nonumber\\
       &\le b \sup_{\substack{\tau \in \bZ_{\ge 0} \\ u:[t,\tau] \cap \bZ \to \bR^m}} \frac{ \| \sD_x \phi (t+\tau, t, x, u) \delta x \|_\cX}{b^\tau}
       \overset{\eqref{dpf4:ISS}}{=} b V(t, x, \delta x).
    \end{align*}
    Thus, we obtain~\eqref{deq:ICC_dini}.

    Finally, $\|\sD_u f\|_{\ux} \le k \ell$ is derived by
    substituting~$t = t_0 + 1$ and~$\delta x_0 = \0_n$ into~\eqref{deq:UISS}.
    

       (item~\ref{di1:ISS} $\implies$ item~\ref{di2:ISS})
        This equivalence can be shown similarly to the proof of Theorem~\ref{thm:ISS} by using
    \begin{align}\label{dpf9:ISS}
       W(t, x, y)
       := \sup_{\substack{h \in \bZ_{\ge 0} \\ u:[t,t+h] \cap \bZ \to \bR^m}} \frac{ \| \phi (t+h, t, x, u) - \phi (t+h, t, y, u) \|_\cX}{b^h}.
    \end{align}
    This completes the proof.    
    \qed    


\begin{ack}                               
This work of Y. Kawano is supported in part by JST FOREST Program Grant Number JPMJFR222E and JSPS KAKENHI Grant Number JP24K00910. This work of F. Bullo is supported in part by AFOSR grant FA9550-22-1-0059. 
\end{ack}


\bibliographystyle{plain}        
\bibliography{autosam-extended-discrete}           

\end{document}